\newcommand{\remove}[1]{}
\newcommand{\mcP}{\mathcal{P}}
\newcommand{\mcC}{\mathcal{C}} 
\newcommand{\mcL}{\mathcal{L}}
\newcommand{\junk}[1]{}
\newenvironment{lp}[2]{\[\begin{array}{rcll}
                        \mbox{#1} & & #2 \\ 
                        \mbox{subject to}}
                         {\end{array}\]}
\newcommand{\cnstr}[4]{\\ #1 & #2 & #3 & #4}
\begin{document}

\title{Integer Plane Multiflow Maximisation:\\Flow-Cut Gap and One-Quarter-Approximation}
\titlerunning{Integer Plane Multiflow Maximisation: Gaps and Approximation}
% If the paper title is too long for the running head, you can set
% an abbreviated paper title here
%
\author{Naveen Garg, Nikhil Kumar, Andr\'as Seb\H{o}}
\authorrunning{N. Garg, N. Kumar, A. Seb\H{o}}
% First names are abbreviated in the running head.
% If there are more than two authors, 'et al.' is used.
%
\institute{Indian Institute of Technology, Delhi\\
\email{naveen@cse.iitd.ac.in}\\ 
\email{nikhil@cse.iitd.ac.in}\\
CNRS, Laboratoire G-SCOP, Univ. Grenoble Alpes\\
\email{andras.sebo@grenoble-inp.fr}}
\maketitle              % typeset the header of the contribution
\begin{abstract}
In this paper, we bound the integrality gap  and the approximation ratio  for   maximum plane multiflow problems  and deduce  bounds on the flow-cut-gap.  Planarity means here that the union of the supply and demand graph  is planar.   
We first prove that there exists a multiflow of value at least half of the capacity of a minimum multicut. We then show how to convert any multiflow into a half-integer one of value at least half of the original multiflow. Finally, we  round any half-integer multiflow into an integer multiflow, losing again at most half of the value, in polynomial time, achieving a $1/4$-approximation algorithm for maximum integer multiflows in the plane, and an integer-flow-cut gap of $8$. 
\keywords{Multicommodity Flow\and multiflow\and multicut \and Network Design\and Planar Graphs\and flow-cut, integrality gap\and approximation algorithm.}
\end{abstract}

\section{Introduction}\label{sec:introduction} 

Given an undirected graph $G=(V,E)$ with edge capacities  $c:E\rightarrow\mathbb{R}_+$, and some pairs of vertices given as edges of the graph $H=(V,F)$, the {\em maximum-multiflow problem} with input $(G,H,c)$, asks for the maximum flow that can be routed in $G$, simultaneously between the endpoints of edges in $F$, respecting the capacities $c$. 

  This is one of many widely studied variants of the multiflow problem. Other popular variants include demand flows, all or nothing flows, unsplittable flows etc. In this paper, we are mainly interested in the integer version of this problem, and the half-integer or fractional versions also occur as tools. When capacities are $1$, the capacity constraint specialises to edge-disjointness, whence the {\em maximum edge disjoint paths problem (MEDP)} between a given set of pairs is a special case; MEDP is NP-Hard to compute for general graphs, even in very restricted settings like when $G$ is a tree \cite{garg1997primal}. 
%and for unbounded $k$ : never used or defined;  and do we say something about bounded k ?  I have paper on that which also works for maximisation, but I think for IPCO we should make our discussion as simple as possible.

The edges in $F$ are called {\em demand edges} (sometimes commodities), those in $E$ are called {\em supply edges}; accordingly, $H=(V,F)$ is  the {\em demand graph}, and $G=(V,E)$ is the {\em supply graph}. If $G+H=(V,E\cup F)$ is planar we call the problem a {\em plane} multiflow problem.  Plane multiflows have been studied for the past forty years, starting with Seymour \cite{seymour1981odd}.  

Let $\mcP_e$ $(e\in F)$ be the set of  paths in $G$ between the endpoints of $e$, and $\mcP:=\cup_{e\in F}\mcP_e$. For $P\in\mcP_e$, the edge $e$ is said to be the {\em demand-edge of $P$}, denoted by $e_P$.   A {\em multiflow}, or for simplicity a {\em flow} in this paper is a function $f:\mcP\rightarrow\mathbb{R}^+$. % the derived function on the edges of $G+H$ (which is in most documents the definition of a flow),  
%$f_{G+H}:E\cup F \rightarrow\mathbb{R}^+$, defined by  $f_{G+H}(e):=\sum_{P\in\mcP:e\in P} f(P)$ if $e\in E$ and $f_{G+H}(e):=\sum_{P\in\mcP_e}f(P)$ if $e\in F$; $f_{G+H}$ is called the {\em total flow} of $f$; it is a linear combination of incident vectors of paths $P\in \mcP$ completed to a circuit, with coefficients $f(P)$. 
The flow $f$ is called {\em feasible}, if $\sum_{\{P\in\mcP:e\in P\}} f(P)\le c(e)$ for all $e\in E$. The {\em value} of a flow $f$ is defined as $|f|:=\sum_{P\in\mcP} f(P)$.

For a path $P\in\mcP$, we refer to $f(P)$ as the {\em flow on $P$}. If the flow on every path is integer or half-integer, we say that the flow is integer or half-integer, respectively. A {\em circuit} is a connected subgraph with all degrees equal to two.

{\em The value of multiflows (without restrictions on integrality) can be maximised in general, in (strongly) polynomial time} \cite[ 70.6, page 1225]{schrijver2003combinatorial} using a linear programming algorithm.

 A {\em multicut} for $(G,H)$ is a set of edges $M\subseteq E$ such that every $P\in\mcP$ contains at least one edge in $M$. \footnote{Given a partition of $V$ so that all demand edges join different classes, the supply edges that join different  classes form a multicut, and inclusionwise minimal multicuts are of this form.}   Multicuts provide the simplest possible and most natural constraints for the dual of the maximum multiflow problem. 
 %(also in the precise sense of facets of the ``cone of flows'' generated by $\mcP$).
 
 It is easy to see that the value of any feasible multiflow is smaller than or equal to the capacity of any multicut. Klein, Mathieu and Zhou \cite{klein2015correlation}  prove  that computing the minimum multicut is NP-hard if $G+H$ is planar,  and they also provide a PTAS.
 
 %So a maximum value multiflow -- called simply a {\em maximum multiflow} -- and a  minimum capacity multicut -- called a {\em minimum multicut} --  {\em can both be asymptotically computed in polynomial time}, but without light on their relation so far.
 
 % For a class of instances, the maximum ratio of the minimum multicut over the maximum multiflow is called the $\textbf{flow-cut gap}$ of the class. The {\em integer  flow-cut gap}, or {\em half-integer flow-cut gap} is the ratio of the minimum multicut to the maximum integer,  respectively half-integer,  multiflow.

%{\tt to continue here, with multicuts, ...  def of ratios ... and for this class of instances we prove bounds on the multiflow-multicut gap.}

There is a rich literature on the maximum ratio of a minimum capacity of a multicut over the maximum multiflow. With an abuse of terminology we will call this the  {\em (possibly integer or half-integer) flow-cut gap}, sometimes also restricted to subclasses of problems. This is not to be confused with the same term used for demand problems.  The integer flow-cut gap is $1$ when $G$ is a path and $2$ when $G$ is a tree. For arbitrary $(G,H)$, the flow-cut gap is $\theta(\log |F|)$ \cite{garg1996approximate}. Building on decomposition theorems from Klein, Plotkin and Rao \cite{klein1993excluded}, Tardos and Vazirani~\cite{TV} showed a flow-cut gap of $O(r^3)$ for graphs which do not contain a $K_{r,r}$ minor; note that for $r=3$ this includes the class of planar graphs. A long line of impressive work, culminated in \cite{seguin2011maximum} proving a constant approximation ratio for maximum half-integer flows, which together with ~\cite{TV} implies a constant half-integer flow-cut gap for planar supply graphs. A simple topological obstruction proves that the integer flow-cut gap for planar supply graphs, even when all demand edges  are on one face of the graph,  also called {\em Okamura-Seymour instances},  is $\Omega(|F|)$\cite{garg1997primal}. 

These results are often bounded by the {\em integrality gap} of the multiflow problems for the problem classes, which is the infimum of MAX$/|f|$ over all instances $(G,H,c)$ of the problem class, and where MAX is  the maximum value of an integer multiflow, and  $f$ is a  multiflow for this same input, and
the {\em half-integrality gap} is defined similarly by replacing ``integer'' by ``half-integer'' in the nominator.  A $\rho$-approximation algorithm $(\rho\in\mathbb{R})$ for a maximisation problem is a polynomial algorithm which outputs a solution of value at least $\rho$ times the optimum; $\rho$ is also called the {\em approximation ratio} (or guarantee). All these gaps and ratios are defined analogously for minimisation problems.
%It is related to the  integrality gap whenever an integer flow satisfying the gap is found via a polynomial algorithm.

Our first result (Section~\ref{sec:multicuts}, Theorem~\ref{maxflow mincut}) is an upper bound of $2$ for the flow-cut gap (i.e. multicut/multiflow ratio) for plane instances, the missing relation we mentioned.  We prove this by relating multicuts to $2$-edge-connectivity-augmentation  in the plane-dual, and applying a bound of  Williamson, Goemans, Mihail and Vazirani \cite{williamson1995primal} for this problem.

We next show (Section~\ref{sec:half}, Theorem~\ref{thm:half}) how to obtain a half-integer flow from a given (fractional) flow in plane instances, reducing the problem to a linear program with a particular combinatorial structure, and solving it. 

Finally (Section~\ref{sec:int}, Theorem~\ref{thm:int}), given any feasible half-integer flow, we show how to extract an integer flow of  value at least half of the original, in polynomial time, using an algorithm that $4$-colors planar graphs efficiently \cite{robertson1996efficiently}. 

These results imply an integrality gap of  $1/2$ for maximum half-integer flows, $1/4$ for maximum integer flows, and the same approximation ratios for each. %a 2-approximation algorithm for minimum multicut for these instances.
In turn, the flow-cut gap of $2$ implies a half-integer flow-cut gap of $4$ and an integer flow-cut gap of $8$ for plane instances. 
% Finally, we provide an example which shows a lower bound of 3/2 on multiflow-multicut gap and 2 on the half-integer multiflow-multicut gap. And integer ? 

Section~\ref{sec:examples} presents some ecamples providing lower bounds for multicuts and upper bounds for multiflows. A summary of the results completed by lower bounds and open problems are stated in Section~\ref{sec:conclusions}. The Appendix  is a guide to some of the tools.

These are the first constant approximation ratios and gaps proved for the studied problems.\footnote{A shortened version of this manuscript has been submitted to IPCO21 in November 2019 and will appear in the volume of the conference (LNCS, Springer) in June 2020.}  Approximately at the same time as our submission to IPCO21, a  manuscript was published on Arxiv, proving
constant ratios for the same problems \cite{paris2020an}. The bounds for the gaps and the approximation ratio of \cite{paris2020an} are less sharp, but an interesting new rounding method, and a new complexity result appear in it. The latter will occur in the following section.

%The above theorems imply a flow-cut gap of 2, a half-integer flow-cut gap of 4 and an integer flow-cut gap of $8$ for plane instances. They also imply a 2-approximation algorithm for maximum half-integer flow, a 4-approximation algorithm for integer flow and a 2-approximation algorithm for minimum multicut for these instances. Finally, we provide an example which shows a lower bound of 3/2 on multiflow-multicut gap and 2 on the half-integer multiflow-multicut gap.

%Note that all these problems as stated above can be solved in polynomial time by using linear programming, but if we add the additional integerity constraints on the flow, all of them become NP-Hard for arbitrary number of demands. If the number of demands is a constant, integer flow problems can be solved in polynomial time by using Graph Minor Theory developed by Robertson and Seymour. The dual to the sum multicommodity flow problem is the cut problem: find a minimum capacity set of edges such that their removal disconnects end points of every demand edge. This is called the multicut problem. The value of maximum total flow is at most the capacity of any multicut. Minimum multicut is NP-Hard even when the supply graph is a tree \cite{garg1997primal}. 

\section{Preliminaries}\label{sec:preliminaries}

We detail here some notions, notations, terminology, facts and tools we use, including some preceding results of influence.  

%\subsection{Other Related Work}

\medskip\noindent
{\bf Demands, the Cut Condition and Plane Duality} 

We describe the notion of demand flows, which is closely related to multiflows defined in the previous section. The problem is defined by the quadruple $(G,H,c,d)$, where $G,H,c$ are as before, {\em demands}  $d:F\rightarrow\mathbb{Z}_+$ are given, and we are looking for a feasible (sometimes in addition integer or half-integer) flow $f$ satisfying $\sum_{\{P\in\mcP:e\in P\}}=d(e)$ for all $e\in F$. 

%the variant of multiflow problems where a demand is associated with every edge in $F$ and the goal is to route all demands simultaneously without violating edge capacities:

%$d:F\rightarrow\mathbb{Z}_{\geq 0}$, if for every demand edge $e=(u,v) \in F$, total flow on paths  between $u$ and $v$ is $d(e)$. 
%A maximum multiflow for $(G,H,c)$ is a feasible flow $f$ which maximises $\sum_{P\in\mcP} f(P)$, and a minimum multicut minimises $c(M)$ on multicuts $M$. 

A {\em cut}  in a graph $G=(V,E)$ is a set of edges of the form 
$\delta(S)=\delta_{E}(S):=\{e\in E: e \hbox{ has exactly one endpoint in $S$}\}$ for all $S\subseteq V$. Note that $\delta(S)=\delta(V\setminus S)$;  $S$, $V\setminus S$ are called the {\em shores} of the cut. For a subset $E'\subseteq E$ we use the notation $c(E'):=\sum_{e\in E'} c(e)$, and we adopt this usual way of extending a function on single elements to subsets. For instance, $d(F'):=\sum_{e\in F'}d(e)$ is the  {\em demand} of the set $F'\subset F$. 

A necessary condition for the existence of a feasible multiflow is the so called {\em Cut Condition}: for every $S \subseteq V, c(\delta_{E}(S)) \geq d(\delta_{F}(S))$, that is,  the  capacity of each cut must be at least as large as its demand. The cut condition is not sufficient for a flow in general, but Seymour \cite{seymour1981odd} showed that it is sufficient for an integer flow,  provided  $G+H$ is planar, the capacities and demands are integer, and their sum is even on the edges incident to any vertex. A half-integer flow follows then for arbitrary integer capacities for such {\em plane} instances (see Section~\ref{sec:introduction}).
The same holds for Okamura-Seymour instances \cite{okamura1981multicommodity}.% Moreover, if $G+H$ is Eulerian then the cut condition implies the existence of integer flows in such instances. 
There are more examples, unrelated to planarity, where the cut condition is sufficient to satisfy all demands, and with an integer flow, for instance when all demand edges can be covered by at most two vertices.

Seymour's theorem \cite{seymour1981odd} on the sufficiency of the cut condition and about the existence of integer flows has promoted plane multiflow problems to become one of the targets of investigations. {\bf This paper is devoted to plane multiflow maximisation.} Seymour's proof is based on a nice correspondence to other combinatorial problems through plane duality, that we also adopt. 
%Through this correspondence a fractional flow and even a half-integer flow follows already from Lov\'asz's result on half-integer packings of odd cuts \cite{Lovasz1976halfinteger}. 

Middendorf and Pfeiffer \cite{middendorf1993complexity} showed that the plane multiflow problem ( already deciding the existence of edge disjoint paths) is NP-Hard, which immediately  implies that {\em plane multiflow maximisation is also NP-hard} (see Footnote~\ref{foot:hard}). The cut condition can be checked in polynomial time\footnote{Seymour's correspondence through dualisation reduces this problem to checking whether $F^*$ is a minimum weight ``$T_{F^*}$-join'' (see eg. \cite{schrijver2003combinatorial}) in $(G+H)^*$ with weights defined by $c$ and $d$, where $T_{F^*}$ is the set of odd degree vertices of $F^*$. This also provides a polynomial separation algorithm for maximising the sum of (not necessarily integer) demands satisfying the cut condition.\label{foot:cutcondition}} so the difficult question to decide is the existence or not of an integer flow when the cut condition is satisfied. As far as multicuts are concerned, we  show that their minimisation in plane instances is equivalent to a 2-edge-connectivity augmentation problem in planar graphs.

Following Schrijver \cite[p. 27]{schrijver2003combinatorial}  we denote the dual of the planar graph $G$ by $G^*=(V^*,E^*)$, where $V^*$ is the set of faces of $G$ and each edge $e\in E$ corresponds to an edge $e^*\in E^*$ joining the two faces that share $e$. For $X\subseteq E$ denote $X^*:=\{e^*:e\in X\}$. An important fact we need and use about dualisation: {\em $C$ is an inclusionwise minimal cut in $G$ if and only if $C^*$ is a circuit in $G^{*}$.} 
% We will always assume a fixed embedding of $G+H$ in the plane. Given an embedding of a planar graph $G$ one can compute its dual

It is tempting to reduce multiflow maximisation to demand flows, using the sufficiency of the cut condition. The plane dual of the question is to maximise the sum of the absolute values of weights (demands) on the negative edges so that there is no negative cycle in the graph. An integer solution to this problem does not imply an integer multiflow, cf. $(C_4,2K_2)$\footnote{This notation means $G=C_4$ and $H=2K_2$, where $C_4=(V,E)$ is the graph consisting of a circuit on $V$,  $|V|=4$, and $2K_2=(V,F)$ is its complement, so $G+H=K_4$. The capacities and the demands are defined to be $1$. To transform this example to multiflow maximisation with integrality gap $1/2$ we keep the capacities, and wish to put an upper bound on the demands, otherwise there is an integer flow of value $2$. Such an upper bound can be realized by replacing each demand edge by two edges in series, a demand and a supply edge, the latter of capacity $1$, getting from  $(C_4,2K_2)$ the MEDP for $(\overline{C_4},\overline{2K_2})$, with integrality gap $1/2$.  This is in general a useful transformation from demand problems to maximisation, for which we will use the same ``overline'' notation. This is at the same time a reduction of  the NP-hardness of plane multiflow feasibility, known to be NP-hard  \cite{middendorf1993complexity} to plane multiflow maximisation.\label{foot:K4}}. The complexity of finding such an integer solution was open,  has been proved recently in \cite[Theorem 2]{solal2019exercices} to be NP-hard when $G+H$ is planar after deleting a vertex,  \cite{paris2020an} proves this  when $G+H$ is planar, but it remains open when $H$ consists of one tree and isolated vertices; then, in addition, \cite{korachpenn1992gap} provides an integer solution whenever the cut condition is satisfied.

\medskip\noindent
{\bf Two Lemmas on Laminar Families}

This  correspondence by plane duality allows to transform any fact on cuts to circuits in the dual and vice versa. For example fractional, half-integer or integer packings of cuts in $G^*$, where each cut contains exactly one edge of $F^*$ correspond to a fractional, half-integer or integer multiflow in $G$. We provide now some related definitions, notations  and facts. We do this directly on the graph $(G+H)^*$ where they will be used; we denote by $V^*$ the vertices of this graph, that is the faces of $G+H$. So  $(G+H)^*=(V^*,E^*\cup F^*)$.  

Let  $\delta (A)=\delta_{E^*\cup F^*}(A)$, $\delta (B)=\delta_{E^*\cup F^*} (B)$ $(A, B\subseteq V^*)$ be two {\em crossing} cuts, i.e. $A$ and $B$ are neither disjoint nor contain one another, each of which contains exactly one edge of $F^*$. They can be replaced by $\delta (A^*\cap B^*)$ and $\delta(A^*\cup B^*)$  or $\delta (A^*\setminus B^*)$ and $\delta (B^*\setminus A^*)$(in the plane dual). It is easy to check that every edge is contained in at most as many cuts after the replacement as before, and if both cuts contain exactly one edge of $F^*$ then this also holds for the replacing cuts. Doing this iteratively and using plane duality, we can convert any feasible flow into another one (without changing the value of the flow) in which no two flow paths cross. Lemma~\ref{lem:lamexists} below formalises exactly what we need.

%Then the same holds either for $\delta (A^*\cap B^*)$ and $\delta(A^*\cup B^*)$ or for , and by changing the shores defining the cuts, we can suppose that no two cuts are crossing or equivalently no flow paths are crossing. 
%for the simplicity of our formulation that the first possibility holds (see the Appendix): 
%If  $\delta (A^*)$, $\delta (B^*)$  occur in the dual of a flow    

A family of subsets of $V^*$ is called {\em laminar} if any two of its members are disjoint or one of them contains the other. If for any two members one of them contains the other we say that the family is a {\em chain}. Given a laminar family $\mcL\subseteq 2^{V^*}$, a chain $\mcC\subseteq \mcL$ is {\em full} (in $\mcL$) if $X, Y, Z\in \mcL, X\subseteq Y\subseteq Z$ and  $X, Z\in\mcC$ implies  $Y\in\mcC$. We call a multiflow $f$  {\em laminar}, if $\{C^*: C=P\cup \{e_P\},  P\in\mcP, f(P)>0\}=\{\delta_{E^*\cup F^*}(L): L\in \mcL\}$, where $\mcL\subseteq 2^{V^*}$ is  laminar. We state the following well-known Lemma without proof (see the Appendix for references and explanations): 

\begin{lemma}\label{lem:lamexists} 
	For every feasible multiflow $f$ there  exists  a  laminar feasible multiflow $f'$ so that $|f'|=|f|$, and $f'$ can be found in polynomial time.  
\end{lemma}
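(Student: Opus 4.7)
The plan is to translate the problem to the plane dual $(G+H)^*$ and carry out the uncrossing procedure sketched in the paragraph preceding the lemma. To every flow path $P\in\mcP_e$ I would associate the set $L_P\subseteq V^*$ for which $\delta_{E^*\cup F^*}(L_P)=(P\cup\{e_P\})^*$; since $P\cup\{e_P\}$ is an inclusionwise minimal circuit of $G+H$, the duality fact quoted in the excerpt makes $\delta(L_P)$ an inclusionwise minimal cut of $(G+H)^*$ whose only $F^*$-edge is $e_P^*$. The multiflow $f$ is then encoded by the nonnegative weighting $g(L):=\sum_{P:\,L_P=L}f(P)$ on subsets of $V^*$, with $\sum_{L:\,e^*\in\delta(L)}g(L)\le c(e)$ for every $e\in E$ and total weight $|f|$. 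The task becomes to produce such a $g'$ of the same total weight whose support is a laminar family in $2^{V^*}$.

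The core step is the uncrossing already described in the excerpt: if $A,B$ in the support of $g$ cross, a short case analysis on the position of the unique $F^*$-edges of $\delta(A)$ and $\delta(B)$ (on whether they coincide, and if not on which side of the other set each endpoint lies) shows that exactly one of the two candidate replacements $(A\cap B,A\cup B)$ and $(A\setminus B,B\setminus A)$ places exactly one $F^*$-edge in each new cut, while submodularity and posimodularity of the cut function restricted to $E^*$ guarantee that any supply edge appears in no more cuts of the replacement than of $\{A,B\}$. Taking $\alpha:=\min(g(A),g(B))$, I would subtract $\alpha$ from $g(A)$ and $g(B)$ and add $\alpha$ to $g$ on the two replacement sets; this preserves feasibility and total weight, and drives at least one of $g(A),g(B)$ to zero, so a set leaves the support.

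The main obstacle I anticipate is the polynomial-time bound on the number of uncrossings, because the natural quadratic potential $\sum_L g(L)\,|L|^2$ (after fixing a root $v_0\in V^*$ and choosing each $L$ to avoid $v_0$) moves in opposite directions under the two admissible replacement types, so no single monotone potential is immediately available. My plan is the standard two-ingredient device of the uncrossing literature, also recalled in the Appendix of this paper: begin with a basic optimal solution of the path LP so that the initial support of $g$ is of size $O(|E|)$, and after every constant number of uncrossings re-run the basic-LP reduction on the current $g$, so that the support stays of polynomial size and the bitsize of the weights remains polynomial. Combined with the facts that any laminar family on $V^*$ has at most $2|V^*|-1$ members, and that each individual uncrossing strictly reduces a carefully chosen combinatorial measure tracked separately for the two replacement types (for instance the number of crossing pairs in the current support, weighted by $g$), this yields polynomially many uncrossings. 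Translating the final laminar $g'$ back through plane duality then gives the required laminar feasible multiflow $f'$ with $|f'|=|f|$.
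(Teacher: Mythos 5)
Your proposal is essentially the paper's own argument: the paper states this lemma without proof and defers to the uncrossing of dual cuts containing exactly one $F^*$-edge described in the Preliminaries and Appendix (with the finiteness and polynomial-time claims delegated to the cited references in Schrijver), which is precisely the dual encoding and uncrossing procedure you carry out. Your additional attention to bounding the number of uncrossings is the content of those cited references rather than a departure from the paper's route.
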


The following useful properties are easy to check: 

For a family $\mcL$ of subsets of $V^*$ and $a\ne b\in V^*$, denote $\mcL(a):=\{L\in\mcL:  a\in L\}$; $\mcL(a,b):=\{L\in\mcL:  a\in L,\, b\notin L\}$. 
%Let the rows of the matrix $M$ be the incidence vectors in $\{0,1\}^\mcL$ of $\{\mcL(a,b): a,b\in S\}$ indexed by the sets in $\mcL$. ($M$ is a $0-1$-matrix of dimension $|V^*\times V^*)| \times |\mcL|$ and the row indexed by the pair $a,b\in V^*$ contains a $1$ in columns indexed by $L\in \mcL$ if and only if $L\in \mcL(a,b)$.) 

\begin{lemma}\label{lem:lampprop} Let $\mcL$ be a laminar family of subsets of   $V^*$.  Then 
	
	a. $|\mcL|\le 2(|V^*| -1)= O(|V|)$. 
	
	b.  $\mcL(a,b)\subseteq \mcL(a)\subseteq V^*$ both form full chains of subsets of $V^*$. \qed 
	 
	%	c.  $M$ is totally unimodular.
\end{lemma}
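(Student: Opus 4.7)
Proof plan. Both parts are folklore consequences of laminarity; I would argue as follows.

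For part~(a), I would invoke the classical size bound for laminar families, obtained by organising $\mcL$ as a Hasse forest under inclusion. The inclusion-maximal members $M_1,\ldots,M_k$ of $\mcL$ are pairwise disjoint by laminarity, and each heads a sub-laminar-family on its own ground set to which I apply induction, giving at most $2|M_i|-1$ members. Summing yields $|\mcL|\le 2\sum_i|M_i|-k\le 2|V^*|-k$. When $k\ge 2$ this is already $\le 2(|V^*|-1)$. In the single-root case $k=1$, I would observe that in our setting $V^*$ itself never lies in $\mcL$: the elements of $\mcL$ arise (via Lemma~\ref{lem:lamexists}) as shores of cuts $\delta_{E^*\cup F^*}(L)$ containing a demand edge, so each such $L$ is a proper non-empty subset of $V^*$. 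Hence $|M_1|\le |V^*|-1$ and the induction step again delivers $2(|V^*|-1)-1$. Finally, Euler's formula applied to the planar graph $G+H$ gives $|V^*|=O(|V|)$, so $|\mcL|=O(|V|)$.

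For part~(b), I would verify the two assertions directly from the definitions. Any two sets in $\mcL(a)$ both contain $a$, so they are not disjoint; by laminarity one must contain the other. Thus $\mcL(a)$ is totally ordered by inclusion, i.e.\ a chain, and the subfamily $\mcL(a,b)\subseteq\mcL(a)$ inherits this property automatically. For fullness in $\mcL$: if $X\subseteq Y\subseteq Z$ with $X,Z\in\mcL(a)$ and $Y\in\mcL$, then $a\in X\subseteq Y$ gives $Y\in\mcL(a)$; if moreover $X,Z\in\mcL(a,b)$, then $Y\subseteq Z$ together with $b\notin Z$ gives $b\notin Y$, placing $Y$ in $\mcL(a,b)$.

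I do not expect any real obstacle. The only mildly delicate step is extracting the extra ``$-1$'' in the bound $2(|V^*|-1)$ of part~(a), which relies on the contextual observation that the laminar families produced by Lemma~\ref{lem:lamexists} never contain the whole vertex set $V^*$; everything else is an immediate consequence of the Hasse-forest representation and of the defining properties of $\mcL(a)$ and $\mcL(a,b)$.
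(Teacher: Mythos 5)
Your proof is correct, and since the paper states Lemma~\ref{lem:lampprop} without proof (it is asserted as ``easy to check''), your argument is exactly the standard one the authors intend: the Hasse-forest induction for the size bound and the direct verification of the chain and fullness properties. Your one genuinely necessary observation --- that the constant $2(|V^*|-1)$ rather than $2|V^*|-1$ requires $\emptyset,V^*\notin\mcL$, which holds because the members of $\mcL$ are shores of cuts containing a demand edge --- is precisely the point a careless reading would miss.
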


% A matrix is called {\em totally unimodular} if the determinant of all of its square submatrices is $\pm 1$ or $0$. This lemma will be useful to solve integer linear programs with coefficient matrix $M$. The statements`a.' and ' b.' are well-known and easy to see, and  "c." as well, as  $M$ is a network matrix: in the rooted tree representation of the laminar system $\mcL$ \cite[pp. 214--216]{schrijver2003combinatorial}, orient the representing tree from the root and apply \cite[Theorem 13.20 to (13.44)]{schrijver2003combinatorial}

%  \begin{lemma}\label{lem:unimod}unimod
%\end{lemma}

\medskip\noindent
{\bf Integrality in demand flows and stable sets}  

%One can try to first maximise the demands satisfying the cut condition (see footnote~\ref{foot:cutcondition}), and then round without much loss. 
%This approach fits  Korach and Penn's  result on the integrality gap for demand problems \cite{korachpenn1992dualitygap} and its extension by Frank and Szigeti \cite{frankszigeti1995surficit}. We were not able though to use these results for maximisation efficiently enough, but they do lead us to some of the tools.

We first compute a half integer flow of value at least half the fractional flow and then convert this into an integer flow. We now describe an instance which illustrates the difficulty in finding an integer flow. Consider a planar graph with a perfect matching without any nontrivial {\em tight cut}\footnote{i.e. a cut with both shores containing more than one vertex, and meeting every perfect matching in exactly one edge. Lov\'asz characterised graphs without nontrivial tight cuts as ``bicritical $3$-connected graphs'' \cite{LovaszPlummer1986MatchingTheory} called {\em bricks}. Such graphs may have arbitrarily many vertices, even under the planarity constraint.}  to be $(G+H)^*$,  and $F^*$ to be any perfect matching in it. Let all capacities be $1$. In order to have an upper bound  of $1$ on each demand, apply the transformation of Footnote~\ref{foot:K4}. Then multiflows can use only the dual edge-sets of stars of vertices in $(G+H)^*$, so an  integer multiflow of value $k$ corresponds exactly to a {\em stable set}, that is, a set of vertices not inducing any edge, of size $k$ in  $(G+H)^*$. This indicates that in order to find a large integer flow, we sometimes need to find large stable sets in planar graphs. 

Despite the restriction of the paths to those arising from duals of stars of vertices, the gap between the integer and the half-integer max-flow is at least $1/2$ for these bricks: it follows from the $4$-color theorem  \cite{AppelHaken1976proof} that a stable set of size at least $|V^*|/4$  exists while the maximum half integer flow cannot exceed $|F|=|V^*|/2$. We will be able to reach this ratio in general (see Section~\ref{sec:int}, Theorem~\ref{thm:int}), and $(\overline{C_4},\overline{2K_2})$, defined as the graph in Footnote~\ref{foot:K4} but instead of   shows that this cannot be improved.      

%Conversely, our result on a half-integer solution turned out to be a direct sharpening of the result of Frank and Szigeti (Theorem ...)

In order to reach this integrality gap of $2$ in general,  we will actually need to find a stable-set of size $n/4$ in a planar auxiliary graph of $n$ vertices. The maximum stable set problem is NP-hard, but there is a PTAS for it in planar graphs \cite{Baker1994}, which, combined with the $4$-color theorem \cite{AppelHaken1976proof} provides a stable-set of size $n/4$;  an alternative is to use the $4$-coloring algorithm of Robertson, Sanders, Seymour and Thomas \cite{robertson1996efficiently} which directly provides a $4$-coloring of a planar graph in polynomial time, and the largest color class is clearly of size at least  $n/4$. Either of these algorithms can be used as a black-box-tool for rounding half-integer flows, so we state the result: 

\begin{lemma}\label{lem:stable} 
	In a planar graph on $n$ vertices, a stable set of size $n/4$ can be found in polynomial time.
\end{lemma}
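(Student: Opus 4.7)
My plan is to follow the second of the two alternatives already sketched in the paragraph preceding the lemma, since it is the most direct and avoids any approximation argument. The idea is simply: produce a proper $4$-coloring of the planar graph in polynomial time, then return the largest color class.

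In more detail, I would first invoke the algorithm of Robertson, Sanders, Seymour and Thomas~\cite{robertson1996efficiently}, which on input a planar graph $G$ on $n$ vertices outputs, in polynomial time (their bound is $O(n^2)$), a proper $4$-coloring $\chi:V(G)\to\{1,2,3,4\}$. Properness means that for every edge $uv\in E(G)$ we have $\chi(u)\ne\chi(v)$, so each color class $V_i:=\chi^{-1}(i)$ is by definition a stable set.

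Next I would apply the pigeonhole principle: since $|V_1|+|V_2|+|V_3|+|V_4|=n$, at least one index $i^*$ satisfies $|V_{i^*}|\ge n/4$. Iterating over the four classes and selecting one of maximum size can be done in linear time, and outputting $V_{i^*}$ gives a stable set of size at least $n/4$, as required. The total running time is dominated by the $4$-coloring step and is therefore polynomial.

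There is no real obstacle here: the deep content is entirely packaged inside the cited $4$-coloring algorithm, whose existence rests on the proof of the Four-Color Theorem~\cite{AppelHaken1976proof} together with the algorithmic refinement of~\cite{robertson1996efficiently}. The only thing one might worry about is whether an $n/4$ guarantee (rather than strictly greater than $n/4$, which would require handling ties) is enough; since the statement asks for size $n/4$ (which should be read as $\lceil n/4\rceil$ by pigeonhole, or $\lfloor n/4\rfloor$ if one prefers integer thresholds), the pigeonhole bound is already tight enough. As the paper notes, an alternative but slightly more indirect route is to combine Baker's PTAS~\cite{Baker1994} with the existential form of the $4$-color theorem; I would only fall back on this if the constructive $4$-coloring algorithm were for some reason disallowed as a black box.
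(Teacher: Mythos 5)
Your proof is correct and is essentially identical to the paper's own justification: the paper likewise invokes the Robertson--Sanders--Seymour--Thomas polynomial-time $4$-coloring algorithm and takes the largest color class, which by pigeonhole has size at least $n/4$. Nothing is missing.
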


A similar rounding argument appeared in the work of  Fiorini, Hardy, Reed and Vetta  \cite{fiorini2007odd} in the somewhat different context of proving an upper bound of Kr\'al and Voss \cite{KralVoss2004odd} for the ratio between ``minimum size of an odd cycle edge transversal'' versus the ``maximum odd cycle edge packing''   using the $4$-color theorem \cite{AppelHaken1976proof}. Our procedure in Section~\ref{sec:int}, occurs to be more general in that it is starting from an {\em arbitrary, not necessary optimal,  half-integer multiflow} for an arbitrary capacity function, and constructs an integer multiflow, with a polynomial algorithm.  Similar difficulties   approached independently with the $4$-color theorem confirm that it may be unavoidable for  plane multiflows.

%\medskip  We conclude that preliminaries will be mainly used through the three lemmas that have been  stated in this section. Planarity will be used already when we apply the first two lemmas, since we will use a cut packing and laminarity in the plane dual. Although the last lemma was motivated by argumentation concerning the integrality gap of   demand problems, it will be used directly for rounding a half-integer solution for multiflow maximisation. 

%In 2-edge-connectivity augmentation, we are given a graph, $G=(V,E)$ with edge-weights and a subset $R$ of edges. The goal is to find a minimum weight subset $S$ of edges of the graph, such that every connected component of $R\cup S$ is 2-edge connected. The 2-edge-connectivity augmentation problem is a special case of Survivable Network Design Problem (SNDP) and has been studied extensively \cite{Goemans:1994:IAA:314464.314497,Jain:1998:FAA:795664.796444,Klein1992WhenCC,williamson1995primal}.

\section{Multicuts versus Multiflows via 2-Connectors}\label{sec:multicuts}

We show in this section that the flow-cut gap is at most two for plane instances, via a reduction to the 2-edge-connectivity augmentation problem. 
%A graph is {\em $2$-edge-connected}, if it is connected, and deleting any of its edges it remains connected. A graph on one vertex and without edges is $2$-edge-connected.

Given $G=(V,E)$,  $H=(V,F)$, a {\em $2$-connector for $H$ in $G$} is a set of edges  $Q \subseteq E$ such that none of the edges $e\in F$ is a cut edge of $(V,Q \cup F)$; equivalently, $Q$ is a $2$-connector if and only if each $e\in F$ is contained in a circuit of $Q \cup F$.
The {\em 2-edge-connectivity Augmentation Problem (2ECAP)} is to find, for given edge costs $c:E\rightarrow \mathbb{Z}_+$ on $E$, a minimum cost $2$-connector.

%A cycle in $(V,F)$ can be contracted to obtain a smaller equivalent problem and hence it is no loss of generality to assume that $(V,F)$ is a forest. The above problem then reduces to the following: Given an edge-weighted graph and a subgraph which is a forest, find a minimum cost subset of edges to be added to it so that each edge of the forest is a part of some cycle. This is a fundamental network design problem and has been studied extensively. The first non-trivial approximation algorithm by Klein and Ravi \cite{Klein1992WhenCC} was a 2-approximation using the primal-dual method. Goemans et al \cite{williamson1995primal,Goemans:1994:IAA:314464.314497} simplified and extended the algorithm and its analysis to a broader class of network design problems. 

Let $(G,H,c)$, $G=(V,E)$, $H=(V,F)$ be the input of a plane multiflow maximisation  problem, and   $(G+H)^*=(V^{*},E^{*} \cup F^{*})$, where $V^*$ is the set of faces of $G+H$. Define $c(e^*):=c(e)$ $(e\in E)$.  
\begin{lemma}\label{lem:multicutconnector}
	The edge-set $Q\subseteq E$ is a multicut for  $(G,H)$ if and only if $Q^{*}$ is a $2$-connector  for $(V^{*}, F^{*})$ in $(V^{*},E^{*})$.
\end{lemma}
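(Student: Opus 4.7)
My plan is to reduce both directions to the classical plane-duality fact recalled in the preliminaries: $C\subseteq E\cup F$ is an inclusionwise minimal cut in $G+H$ if and only if $C^*$ is a circuit in $(G+H)^*$; more generally, the dual of any cut is a disjoint union of circuits (a cycle) in $(G+H)^*$. Throughout the proof I will work with the cut/circuit correspondence on the graph $G+H$ (not $G$), so that edges of $F$ and their duals in $F^*$ are treated on the same footing as the supply edges.

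For the forward direction ($\Rightarrow$), suppose $Q\subseteq E$ is a multicut and fix any demand edge $e=uv\in F$. Since $Q$ kills every $u$--$v$ path in $G$, the vertices $u$ and $v$ lie in different components of $(V,E\setminus Q)$; let $S$ be the component containing $u$. Then $v\notin S$, so $e\in\delta_{E\cup F}(S)$, and by construction $\delta_E(S)\subseteq Q$. Hence $\delta_{E\cup F}(S)\subseteq Q\cup F$, and its dual $\delta_{E\cup F}(S)^{*}$ is a cycle in $(V^{*},Q^{*}\cup F^{*})$. Decomposing this cycle into edge-disjoint circuits and picking the one through $e^{*}$ shows $e^{*}$ is not a cut edge of $(V^{*},Q^{*}\cup F^{*})$, which is exactly the $2$-connector condition at $e^{*}$.

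For the converse ($\Leftarrow$), assume $Q^{*}$ is a $2$-connector for $(V^{*},F^{*})$. For each $e=uv\in F$ there is a circuit $C^{*}\subseteq Q^{*}\cup F^{*}$ containing $e^{*}$. By plane duality, $C$ is an inclusionwise minimal cut in $G+H$, say $C=\delta_{E\cup F}(S)$. Since $e\in C$, the endpoints of $e$ are separated by $S$, and $C\subseteq Q\cup F$ forces $\delta_E(S)\subseteq Q$. Therefore every $u$--$v$ path in $G$ must use an edge of $\delta_E(S)\subseteq Q$; as $e$ was arbitrary, $Q$ is a multicut.

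There is no real obstacle here: the only delicate point is in the forward direction, where the cut $\delta_{E\cup F}(S)$ need not be inclusionwise minimal and its dual is therefore a cycle rather than a single circuit, so one must invoke the decomposition of a cycle into circuits to extract one that contains the prescribed edge $e^{*}$. Everything else is bookkeeping with $\delta$-notation and the translation $X\leftrightarrow X^{*}$.
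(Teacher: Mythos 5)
Your proof is correct and follows the same route as the paper's: both directions reduce to the duality between inclusionwise minimal cuts of $G+H$ and circuits of $(G+H)^*$, with the component $S$ of $u$ in $(V,E\setminus Q)$ providing the cut in the forward direction. The paper compresses your explicit cycle-into-circuits step by directly asserting the existence of an inclusionwise minimal cut $C\subseteq Q\cup F$ containing $e$ (the primal-side version of the same decomposition), so the two arguments are essentially identical.
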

\begin{proof}
	The edge-set $Q \subseteq E$ forms a multicut in $G$ if and only if the endpoints $u$, $v$  of any edge $e\in F$ are in different components  of  $(V,E \setminus Q)$, that is, if and only if for all $e\in F$ there exists an inclusionwise minimal cut $C\subseteq Q\cup F$  of  $G+H$ such that $e\in C$. But we saw among the preliminaries concerning duality that $C$ is an inclusionwise minimal cut in $G+H$ if and only if $C^*$ is a circuit in $Q^*\cup F^*$. Summarising,  $Q \subseteq E$ forms a multicut in $G$, if and only if for all $e^*\in F^*$ there exists a circuit  $C^*$ in $Q^*\cup F^*$ such that $e^*\in C^*$.   This means exactly that $Q^*\subseteq E^*$ is a $2$-connector for $(V^{*}, F^{*})$, in $(V^{*},E^{*})$ , finishing the proof. 
	\qed
\end{proof}

%We next argue that the duals raised in that algorithm can be interpreted as flows in our setting.  

Let $(G,H,c)$ be an instance of multiflow problem with $G+H$ planar. Let $p: 2^{V^*} \rightarrow \{0,1\}$ with $p(S)=1$ if and only if  $|\delta (S)\cap F^*|=1$, otherwise $p(S)=0$ $(S \subseteq V^{*})$. The following linear program specialises the one investigated in 
\cite{williamson1995primal}:  
\begin{lp}{minimise}{\sum_{e \in E^{*}} c(e) x(e),}
	\cnstr{\sum_{e \in \delta(S) } x(e)}{\ge}{p(S),}{ S \subseteq V^{*};\qquad\qquad (2ECAP)}
	\cnstr{x(e)}\ge 0 {e \in E^{*}.}
\end{lp}
Since $p$ is $\{0,1\}$-valued so are the coordinatewise minimal integer solutions including the integer optima of (2ECAP).   
{\em The $\{0,1\}$-solutions are exactly the (incidence vectors of) $2$-connectors of $(V^*, F^*)$ in $(V^*, E^*)$.} In the dual linear program of (2ECAP), we have a variable $y(S)$ for all $S \subseteq V^{*}$, constraints $\sum_{S: e \in \delta(S) } y(S) \le c_{e}$ for all ${ e \in E^{*}}$ and $y(S) \ge 0$ for all $S \subseteq V^{*}$. The objective is to maximise $\sum_{S \subseteq V^{*}} p(S)y(S)$. Williamson, Goemans, Mihail and  Vazirani \cite{williamson1995primal} developed a primal-dual algorithm finding for given input $(G+H)^*$ and $c$, an {\em  integer primal solution $x_{\small\rm WGMV}$ to a class of linear programs including (2ECAP), together with a (not necessarily integer) dual solution $y_{\small\rm WGMV},$ in polynomial time,} proving the following {\em WGMV inequality} \cite[Lemma 2.1]{williamson1995primal}, see also \cite[Section 20.4]{kortevygen2018combinatorial}: 
\[{\rm LIN}\le OPT\le \sum_{e \in E^{*}} c(e) x_{\small\rm WGMV}(e)\le 2 \sum_{S\subseteq V^*, p(S)=1} y_{\small\rm WGMV}(S)\le 2\,{\rm LIN}\le 2\,{\rm OPT},\]
where  OPT is the minimum cost of a $2$-connector, and LIN is the optimum of (2ECAP). We will refer to this algorithm as the {\em WGMV algorithm}.

Note that the $WGMV$ algorithm works for the class of weakly supermodular functions. A function $h :2^{V} \rightarrow \{0,1\}$ is called {\em weakly supermodular} if $h(V)=0$ and for any $A,B \subseteq V$, $h(A)+h(B) \leq \max \{h(A \cap B)+h(A \cup B),h(A \setminus B) + h(B \setminus A)\}$. It can be verified that $p$ defined above is weakly supermodular.

%Note that an integer solution $x_{\small\rm WGMV}\in\{0,1\}^{|E^*|}$ satisfies (2ECAP) if and only if it is the  incidence vector of a $2$-connector $Q_{\small\rm WGMV}$ of $(V^*, F^*)$ in $(V^{*},E^{*})$.

%$\2ECAP$ can then be equivalently formulated as follows: find a minimum weight subset of edges $X^{*} \subseteq E^{*}$ such that at least $f(S)$ edges of $X^{*}$ meet the cut $\delta(S)$. The incidence vectors   

%In the full version of this paper we include a specialisation of  the WGMV algorithm to 2ECAP. 

\begin{theorem}\label{maxflow mincut}
	Let $(G, H, c)$ be a plane multiflow problem. Then there  exists a feasible multiflow $f$ and a multicut $Q$, such that $c(Q)\le 2|f|$, where both $f$ and $Q$  can be computed in polynomial time.
\end{theorem}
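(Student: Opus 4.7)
The plan is to exploit the WGMV primal–dual algorithm for (2ECAP) on the planar dual, interpreting its integer primal output as a multicut (via Lemma~\ref{lem:multicutconnector}) and its fractional dual output as a multiflow (via plane duality). Concretely, I would set $c(e^*):=c(e)$ on $(G+H)^*$, run WGMV on $(2ECAP)$, and obtain in polynomial time an integer $2$-connector $x_{\small\rm WGMV}$ together with a feasible dual $y_{\small\rm WGMV}$ satisfying
\[
\sum_{e^*\in E^*}c(e^*)x_{\small\rm WGMV}(e^*)\;\le\;2\sum_{S\subseteq V^*,\,p(S)=1}y_{\small\rm WGMV}(S).
\]
By Lemma~\ref{lem:multicutconnector}, $Q:=\{e\in E:x_{\small\rm WGMV}(e^*)=1\}$ is then a multicut of cost equal to the left-hand side, so it suffices to extract a feasible multiflow $f$ of value at least the right-hand side divided by~$2$.

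The technical heart is the dual-to-flow translation. For every $S\subseteq V^*$ with $p(S)=1$, the cut $\delta(S)\subseteq E^*\cup F^*$ contains exactly one edge of $F^*$, and by planar duality its dual edge-set $(\delta(S))^*$ is an even subgraph of $G+H$ containing exactly one demand edge $f_S\in F$. Any even subgraph decomposes into edge-disjoint circuits, and exactly one circuit $C_S$ of this decomposition contains $f_S$; then $P_S:=C_S\setminus\{f_S\}$ is a path in $G$ between the endpoints of $f_S$, hence $P_S\in\mcP_{f_S}\subseteq\mcP$. I would define
\[
f(P):=\sum\bigl\{y_{\small\rm WGMV}(S):p(S)=1,\,P_S=P\bigr\}\qquad(P\in\mcP),
\]
so that $|f|=\sum_{p(S)=1}y_{\small\rm WGMV}(S)$ by construction.

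The only remaining point is feasibility of $f$, which is where the dual constraint is used. For every $e\in E$, if $e\in P_S$ then $e\in C_S\subseteq (\delta(S))^*$, i.e.\ $e^*\in\delta(S)$; therefore
\[
\sum_{P\in\mcP,\,e\in P}f(P)\;=\sum_{S:\,e\in P_S}y_{\small\rm WGMV}(S)\;\le\sum_{S:\,e^*\in\delta(S)}y_{\small\rm WGMV}(S)\;\le\;c(e^*)=c(e),
\]
the last inequality being dual feasibility for (2ECAP). Combined with the WGMV inequality this gives $c(Q)\le 2|f|$ as required, and all steps are polynomial because the WGMV algorithm runs in polynomial time and the circuit $C_S$ can be read off from the (laminar) support of $y_{\small\rm WGMV}$.

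The main obstacle I anticipate is conceptual rather than computational: making the circuit-extraction step unambiguous, namely verifying that $(\delta(S))^*$ really is an even subgraph of $G+H$, that $f_S$ lies in a unique circuit of its circuit decomposition, and that any such decomposition can be computed in polynomial time; this is routine once one invokes the standard planar-duality dictionary (cuts correspond to elements of the cycle space) already recalled in the Preliminaries.
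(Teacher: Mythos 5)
Your proposal is correct and follows essentially the same route as the paper: run the WGMV primal--dual algorithm on (2ECAP) in the planar dual, read the integer primal solution as a multicut via Lemma~\ref{lem:multicutconnector}, turn each dual variable $y_{\small\rm WGMV}(S)$ with $p(S)=1$ into flow on the path obtained from the circuit of $(\delta(S))^*$ through the unique demand edge, and let dual feasibility give feasibility of the multiflow. Your extra care in justifying the circuit-extraction step (even subgraph, edge-disjoint circuit decomposition, uniqueness of the circuit through $f_S$) only makes explicit what the paper leaves implicit.
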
 

\begin{proof}  Recall that the WGMV algorithm finds $x_{\small\rm WGMV}$ and $y_{\small\rm WGMV}$ satisfying the WGMV inequality, where $x_{\small\rm WGMV}$ is the incidence vector of a $2$-connector of $(V^*, F^*)$ in $(V^{*},E^{*})$ , let us denote its plane  dual set in $G+H$ by $Q_{\small\rm WGMV}$.  By Lemma~\ref{lem:multicutconnector} $Q_{\small\rm WGMV}$ is a multicut. 
	
Consider a set $S$ with $p(S)=1$, that is, $|\delta(S)\cap F^*|=|\delta(S)^*\cap F|=1$, all the other sets can be supposed to be absent from the inequalities. 
Then  $\delta(S)^*$  contains a circuit  $C$  of  $G+H$ containing the unique edge  of $|\delta(S)^*\cap F|$. Therefore $C\setminus F$ is a path in $G$, denote it by $P_S$.  Define a multiflow $f$ in $G+H$  by $f(P_S)=y_{\small\rm WGMV}(S)$. The dual feasibility of $y_{\small\rm WGMV}$ means exactly that the multiflow $f$ is feasible.  By our construction and the WGMV inequality we have 
	\[c(Q_{\small\rm WGMV})=  \sum_{e \in E^{*}} c(e) x_{\small\rm WGMV}(e)\le 2 \sum_{S\subseteq V^*, p(S)=1} y_{\small\rm WGMV}(S) = 2|f|.\]
	So the multicut $Q:=Q_{\small\rm WGMV}$ and the multiflow $f$ satisfy the claimed inequality; all operations, including the WGMV algorithm run  in polynomial time. \qed \end{proof}

Note that if $y_{\small\rm WGMV}$ is half-integer (assuming integer edge-costs), the obtained multiflow is half-integer and a half-integer flow-cut gap of $2$ directly follows. There are examples where the WGMV algorithm does not produce a half-integer dual solution, but we do not know of an instance where half-integer flow-cut gap is larger than 2.  
%A half-integer solution may follow by merging the ideas of the  two sections and providing directly a half integer dual to (2ECAP), thus syncronising the flow-cut ratio with the integrality gap and the approximation ratio.
%\begin{figure}[htb]
%	\centering
%	\includegraphics[width=2in]{}
%	\caption{Supply edges are black while demand edges are red. Capacity of all supply edges is 1. The dual built by WGMV is $y_{\{c\}}=y_{\{d\}}=1/2$, $y_{\{e\}}=3/4$ and $y_{\{b,c,d\}}=1/4$}.
%\end{figure}

\section{From Fractional to Half-Integer}\label{sec:half}

We show here how to convert a flow to a half-integer one, loosing at most half of the flow value, provided the capacities are integers.

\begin{theorem}\label{thm:half} Let $(G, H, c)$ be a plane multiflow problem, where  $c:E\rightarrow\mathbb{Z}_+$ is an integer capacity function.  Given a feasible multiflow $f$, there exists a  feasible half-integer multiflow $f'$, such that $|f'|\ge |f|/2$, and it can be computed  in polynomial time.
\end{theorem}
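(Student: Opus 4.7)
The plan is to exploit the laminar structure of the given fractional flow and reduce the rounding problem to a structured linear program. I would begin by applying Lemma~\ref{lem:lamexists} to replace $f$ by a laminar feasible flow of the same value, associated with a laminar family $\mathcal{L} \subseteq 2^{V^*}$ in $(G+H)^*$; write $f(L) := f(P_L)$ for each $L \in \mathcal{L}$ and its associated flow path $P_L$. By Lemma~\ref{lem:lampprop}(b), for each supply edge $e \in E$ corresponding to $e^* = \{a,b\} \in E^*$, the sets $L$ with $e^* \in \delta(L)$ are exactly $\mathcal{L}(a,b) \cup \mathcal{L}(b,a)$, a disjoint union of two full chains of $\mathcal{L}$. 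Identifying $\mathcal{L}$ with the node-set of its Hasse forest $T$, every capacity constraint picks out tree nodes lying on two rooted sub-paths that meet just below the least common ancestor of the leaves representing $a$ and $b$.

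This motivates the linear program
\begin{equation*}
\max \sum_{L \in \mathcal{L}} x(L) \quad \text{subject to} \quad \sum_{L:\,e^* \in \delta(L)} x(L) \le c(e) \ \ (e \in E), \quad x(L) \ge 0,
\end{equation*}
for which $f$ is feasible, so the LP optimum is at least $|f|$. Since $|\mathcal{L}| = O(|V|)$ by Lemma~\ref{lem:lampprop}(a), the LP has polynomial size. The central claim to establish is that this LP admits a half-integer feasible solution of value at least $|f|/2$, computable in polynomial time; reading off $f'$ from such a solution will complete the proof.

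The main obstacle is proving the half-integrality: the constraint matrix is not in general totally unimodular, so extreme-point integrality cannot be invoked as a black box. I would attempt two complementary approaches. The structural route is to show, via an uncrossing or primal-dual argument exploiting the two-chain description of each row, that every extreme point of the LP is already half-integer; the LP optimum is then at least $|f|$ and one simply outputs a vertex. The constructive route is to scale $f$ by $\tfrac12$ and round each $f(L)/2$ to a half-integer in a traversal of $T$ from leaves to root, ensuring that each round-down at a node frees enough capacity on the two chains meeting at that node to pay for a later round-up, so that the aggregate loss in value is at most $|f|/2$. Either route, carried through, yields the desired half-integer $f'$ in polynomial time.
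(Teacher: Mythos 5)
Your setup — passing to a laminar flow via Lemma~\ref{lem:lamexists}, observing via Lemma~\ref{lem:lampprop} that the sets cut by a supply edge $e^*=\{a,b\}$ split into the two full chains $\mcL(a,b)$ and $\mcL(b,a)$, and writing the natural LP over $\mcL$ — matches the paper exactly up to the point where the real work begins, and there the proposal has a genuine gap. You correctly note that the combined constraint matrix (one row per edge, supported on the union of two chains) is not totally unimodular, but neither of your two proposed remedies closes the argument. The ``structural route'' is in fact false: the instance of Figure~\ref{fig:non-half-integer} has a unique maximum flow of value $7/3$ whose values are thirds, while the maximum half-integer flow is $2$; restricting your LP to the support of (a laminar version of) that flow gives an LP whose optimum is $7/3$ and whose optimal extreme points therefore cannot be half-integer, so you cannot ``simply output a vertex.'' The ``constructive route'' (leaf-to-root rounding with a capacity-charging argument) is stated only as an intention; no invariant is given that bounds the aggregate loss by $|f|/2$, and it is not clear how a local round-down on one chain pays for a round-up whose constraint involves the \emph{other} chain through the same edge.

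The missing idea in the paper is a deliberate \emph{relaxation} rather than an analysis of the original LP: replace each capacity constraint $\sum_{L\in\mcL(a,b)\cup\mcL(b,a)}x_L\le c(e)$ by \emph{two} constraints, $\sum_{L\in\mcL(a,b)}x_L\le c(e)$ and $\sum_{L\in\mcL(b,a)}x_L\le c(e)$, each carrying the full capacity $c(e)$ (this is the LP (\ref{eq:relax})). Now every row is a single full chain of the laminar family; by the Edmonds--Giles arborescence representation these rows are subpaths of root-paths, so the matrix is a network matrix and hence totally unimodular. The relaxed LP therefore has an integer optimum $x$ of value at least $|f|$ (since $f$ is feasible for it), computable in polynomial time, and since each edge is now covered with total load at most $c(e)+c(e)=2c(e)$, the vector $f'=x/2$ is a feasible half-integer flow of value at least $|f|/2$. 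Without this doubling-and-halving step (or an equivalent substitute), your proposal does not establish the theorem.
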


\begin{proof} By Lemma~\ref{lem:lamexists} we can suppose that the given feasible multiflow $f$ is laminar and can be found in polynomial time. Let  $\mcL\subseteq 2^{V^*}$  be the laminar family of cuts in   $(G+H)^*$,    with
	$\{C^*: C=P\cup\{e_P\}, f(P)> 0\}=\{\delta_{E^*\cup F^*}(L): L\in \mcL\}$ (see Section~\ref{sec:preliminaries}, just above Lemma~\ref{lem:lamexists}). Denote $f_L:=f(\delta (L)^*\setminus F)$, $(L\in\mcL)$. The feasibility of the multiflow $f$ means 
	$\sum_{L\in\mcL, e\in \delta_{E^*}(L)}f_L\le c(e)$, that is, $x_L=f_L\in\mathbb{N}_+$ satisfies
	\begin{equation}\label{eq:flow}
	\sum_{L\in\mcL}x_L = |f|, \hbox{and}\quad \sum_{L\in\mcL, e\in \delta_{E^*}(L)}x_L\le c(e),  \hbox{for all $e\in E$, $x_L\ge 0,\, (L\in\mcL)$.} 
	\end{equation}
	Clearly, the edge $e=uv$ is contained in exactly those sets $\delta(L)$ $(L\in\mcL)$ for which $L\in L(u,v)\cup L(v,u)$, where $\mcL(a,b):=\{L\in\mcL:  a\in L,\, b\notin L\}$ $(a,b\in V^*)$, and both $\mcL(u,v)$ and $\mcL(v,u)$ form full chains (Lemma~\ref{lem:lampprop}). So the linear program
	\begin{equation}\label{eq:relax}
	\begin{aligned}
	& \underset{}{\text{max}}
	& & \sum_{L\in\mcL} x_L \\
	& \text{subject to} & &  \sum_{L\in\mcL(u,v)} x_L \leq c(e)  & \text{for all }  (u,v)\in V^*\times V^*, uv=e^*\in E^*\\
%	& & & \sum_{i: r_i\in Q_{e_2}} x_i \leq c(e)  & \forall e \in E,\\
	& & &  x_L \geq 0 \; & \text{for all } L\in\mcL,\\
	\end{aligned}
	\end{equation}
	is a relaxation of (\ref{eq:flow}): for each  $u, v\in V^*, uv=e^*\in E^*$ the coefficient vector of (\ref{eq:flow}) associated with $e^*$ is the sum of the   coefficient vectors, one  for each of  $(u,v)$ and $(v,u)$, of the two inequalities associeated with $e^*$ in (\ref{eq:relax}).  Both of these ($\mcL(u,v)$ and $\mcL(v,u)$) correspond  to full chains in the laminar family $\mcL$, and the right  hand side $c(e)$ is repeated for both. 
	%, that is, to paths from the root in the ``tree-representation'' of the laminar family $\mcL$.  
	
	Denote $M=M(f)$ the $2|E|\times |\mcL|$ coefficient matrix of (\ref{eq:relax})  (without the non-negativity constraints).

	According to   Edmonds and Giles  \cite{edmondsgiles1977}, $\mcL$ has a rooted tree (arborescence) representation in which the full chains correspond to subpaths of paths from the root, so $M$ is a network matrix. As such, it is totally unimodular by Tutte \cite{tutte1965networkunimodular} and (\ref{eq:relax})  has an integer optimum $x$, computable in polynomial time by \cite{hoffmankruskal1956integral}, \cite[Theorem 19.3 (ii), p. 269]{schrijver1986LPandIP}.  
	
	To finish the proof now, note that on the one hand,  $f_L$ $(L\in\mcL)$ is a solution to (\ref{eq:flow}), and therefore it is also a feasible solution of the relaxation  (\ref{eq:relax}). Since $x_L$ $(L\in\mcL)$ is the maximum of (\ref{eq:relax}),  $\sum_{{L} \in \mcL}  x_L \ge  \sum_{{L} \in\mcL} f_L=|f|$.  According to the two inequalities of (\ref{eq:relax}) associated to $e^*$, the sum of coefficients of the paths containing any given edge $e\in E$ is at most $c(e)+c(e)=2\, c(e)$, so $f'=x/2$ defines a half-integer feasible flow in $(G,H,c)$ (by assigning the flow value $f'(L)$ to the path $\delta(L)^*\setminus F$,  so  $|f'|\ge |f|/2$, finishing the proof.  
	\qed
	\end{proof}

For more explanations and references  showing that $M$ is a network matrix, and an alternative direct combinatorial solution of the integer linear program with a simple  greedy-type algorithm, see the Appendix. %~\ref{sec:uncrossing}.  
%The optimum of  this linear program  is integer and can be computed efficiently \cite{schrijver2003combinatorial}. %For completeness we give a proof in Appendix B. 

This proof does not fully exploit the possibilities of totally unimodular matrices: instead of putting $c(e)$ as right hand side for both inequalities of (\ref{eq:relax}) associated with $e^*$ we can put everywhere the smallest integer capacities satisfied by the fractional flow. Since the fractional values on the mentioned two inequalities sum to at most $c(e)$ and not $2c(e)$ we get a sharper result this way. The proof works if we replace the capacities by the rounded up fractional flow, leading to an error of only $1$ compared to the original capacities, due to the round-up. Let us denote by $\underline 1$ the all $1$ function on $E$, and check this precisely:  

\begin{theorem}\label{thm:c+1} Let $(G, H, c)$ be a plane multiflow problem, where  $c:E\rightarrow\mathbb{Z}_+$.  Given a feasible multiflow $f$, there exists a feasible integer multiflow $f'$, computable in polynomial time, feasible for the capacity function $c+\underline 1$, and $|f'|\ge |f|$.
\end{theorem}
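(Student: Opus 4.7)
The plan is to follow the proof of Theorem~\ref{thm:half} almost verbatim, but to replace each right-hand side $c(e)$ in the relaxation~(\ref{eq:relax}) by a tighter integer upper bound derived from the given fractional flow $f$ itself. This exploits the already-established total unimodularity of the same laminar network matrix $M$, and incurs only an additive slack of one unit per edge instead of the factor-two loss of Theorem~\ref{thm:half}.

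First I would invoke Lemma~\ref{lem:lamexists} to assume $f$ is laminar, with associated laminar family $\mcL \subseteq 2^{V^*}$ and values $\{f_L\}_{L \in \mcL}$. For each ordered pair $(u,v)$ with $uv = e^* \in E^*$, define
\[
f_{u,v}(e) \;:=\; \sum_{L \in \mcL(u,v)} f_L,
\]
the portion of $f$ that traverses $e$ from the $u$-side to the $v$-side. Since $\mcL(u,v) \cap \mcL(v,u) = \emptyset$ and $\mcL(u,v) \cup \mcL(v,u) = \{L \in \mcL : e^* \in \delta(L)\}$, feasibility of $f$ becomes $f_{u,v}(e) + f_{v,u}(e) \le c(e)$ for every $e \in E$.

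Next I would consider the linear program
\[
\begin{aligned}
\max \quad & \sum_{L \in \mcL} x_L \\
\text{s.t.} \quad & \sum_{L \in \mcL(u,v)} x_L \;\le\; \lceil f_{u,v}(e) \rceil \qquad \text{for all } (u,v) \text{ with } uv = e^* \in E^*, \\
& x_L \;\ge\; 0 \qquad (L \in \mcL).
\end{aligned}
\]
The vector $(f_L)_{L \in \mcL}$ is feasible, since $\sum_{L \in \mcL(u,v)} f_L = f_{u,v}(e) \le \lceil f_{u,v}(e) \rceil$. The constraint matrix is exactly the laminar network matrix $M$ used in the proof of Theorem~\ref{thm:half}, hence totally unimodular; the right-hand side is now integer, so an integer optimum $x^*$ exists and can be computed in polynomial time, with $\sum_L x^*_L \ge \sum_L f_L = |f|$.

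Finally, for each $L \in \mcL$ I would route $x^*_L$ units along $P_L := \delta(L)^* \setminus F$. The resulting integer multiflow $f'$ has $|f'| = \sum_L x^*_L \ge |f|$, and its load on any $e \in E$ is
\[
\sum_{L \in \mcL(u,v)} x^*_L + \sum_{L \in \mcL(v,u)} x^*_L \;\le\; \lceil f_{u,v}(e) \rceil + \lceil f_{v,u}(e) \rceil \;\le\; c(e) + 1.
\]
The only substantive new point is this last inequality: since $f_{u,v}(e) + f_{v,u}(e) \le c(e)$ with $c(e) \in \mathbb{Z}$, each ceiling exceeds its argument by strictly less than $1$, so their sum is strictly less than $c(e) + 2$ and, being an integer, is at most $c(e) + 1$. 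I do not expect any real obstacle beyond this short arithmetic check, since the LP setup, total unimodularity, and polynomial-time solvability are inherited directly from Theorem~\ref{thm:half}.
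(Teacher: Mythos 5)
Your proposal is correct and follows essentially the same route as the paper: assume $f$ laminar, set up the LP over the laminar network matrix $M$ with right-hand sides $\lceil \sum_{L\in\mcL(u,v)} f_L\rceil$ (the paper's $d(u,v)$), invoke total unimodularity for an integer optimum of value at least $|f|$, and use $\lceil d(u,v)\rceil+\lceil d(v,u)\rceil\le c(e)+1$ to get feasibility for $c+\underline 1$. The closing arithmetic check is exactly the step the paper relies on, so there is nothing to add.
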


\begin{proof} Let $f\in\mathbb{R}^{\mcL}$ be a feasible multiflow, and
$M=M(f)$ the $2|E|\times |\mcL |$ coefficient matrix defined in the proof of Theorem~\ref{thm:half}. Define $d:=Mf\in\mathbb{R}^{2|E|}$, and consider the linear program 
	\begin{equation}\label{eq:c+1}
\begin{aligned}
& \underset{}{\text{max}}
& & \sum_{L\in\mcL} x_L \\
& \text{subject to} & &  \sum_{L\in\mcL(u,v)} x_L \leq\lceil d(u,v)\rceil  &\text{for all }  (u,v)\in V^*\times V^*, uv=e^*\in E^*\\
& & &  x_L \geq 0 \; & \text{for all } L\in\mcL,\\
\end{aligned}
\end{equation}
where   $d(u,v):=\sum_{L\in\mcL(u,v)}f_{\delta(L)^*\setminus F}$. In words, (\ref{eq:c+1}) has exactly the same coefficients as (\ref{eq:relax}), 
but the right hand sides $d(u,v)$ and $d(v,u)$ of the two inequalities associated with $e\in E$, $e^*=uv$ are defined with the sum of flow values on $\delta(L)^*\setminus F$,  for $L\in\mcL(u,v)$ and  $L\in\mcL(v,u)$ respectively. 
%\[\lfloor d_1(e)\rfloor \le \sum_{R_i\in\Rscr_1(e)}x_i\le \lceil d_1(e)\rceil,\quad \displaystyle\lfloor d_2(e)\rfloor\le \sum_{R_i\in\Rscr_2(e)}x_i\le  \lceil d_2(e)\rceil, \hbox { for all $e\in E$}.\]

Since $f$ is a feasible flow for $(G,H,c)$,   $d(u,v) + d(v,u)\le c(e)$, so  $\lceil d(u,v)\rceil+\lceil d(v,u)\rceil\le c(e)+1$, and $f$ is feasible for (\ref{eq:c+1}) since the capacities have been defined by rounding up the flow values. On the other hand, the coefficient matrix is totally unimodular, so by \cite{hoffmankruskal1956integral}, \cite[Theorem 19.3 (ii), p. 269]{schrijver1986LPandIP} the linear program (\ref{eq:c+1}) has an integer maximum solution $f'$, again  computable in polynomial time, and $f'\le c+\underline 1$, $|f'|\ge |f|$.
\qed
\end{proof}

Theorem~\ref{thm:half} is an immediate consequence of Theorem~\ref{thm:c+1}:

For each $k\in\mathbb{N}$, $k\ge 1$,  $\frac{k+1}{2}\le k$ holds, so (after deleting $0$ capacity edges)  $(c + \underline 1)/2  \le  c$, and therefore, dividing by $2$ the primal optimum of (\ref{eq:c+1}), we immediately get a half-integer solution to (\ref{eq:relax}). 
%Note that the lower bound inequalities are not used in the proof, but we thought it may be useful  to show an integer multiflow as close as possible to $f$,  for eventual different uses.

%Since $c(e)$ is integer, $c(e)\ge 1$,  the bound $c(e)+1$ below  is   better than the bound $2\, c(e)$ at the end of the above proof $(e\in E)$.  Nevertheless,  if $c(e)=1$ (and only then) the coefficient $2$ is tight, so just for bounding the integrality gap it is not worth making extra effort.  

This result extends a natural consequence for maximisation of the tight additive integrality gap known for plane demand flow problems with integer capacities and demands satisfying the cut condition: according to a result of Korach and Penn \cite{korachpenn1992gap}, if all demand edges lie in two faces of the supply graph, {\em there exists an integer multiflow satisfying all demands but at most $1$.} This readily implies that increasing each capacity by $1$,  an integer  flow of the same value as the maximum flow for the original capacities, can be reached.

From Frank and Szigeti \cite{frankszigeti1995surficit}  the same can be deduced  for demand-edges lying on an  arbitrary  number $k$ of faces. Indeed, increasing all capacities by $1$, {\em  the surplus of the cut condition will be at least $k$, which is the Frank-Szigeti condition for integer multiflows. } Theorem~\ref{thm:c+1} states that this consequence is actually true in general, without requiring the integrality of the demands, and also for the  maximisation problem; the same holds for maximum ``packings of $T$-cuts''. 
\section{From Half-Integer to Integer}\label{sec:int}
In this section, we show how to round a half-integer flow to an integer one, losing at most one half of the flow value.

\begin{theorem} \label{thm:int}
	Let $(G, H, c)$ be a plane multiflow problem, where $G=(V,E)$,  $H=(V,F)$ and $c:E\rightarrow\mathbb{Z}_+$.  Given a feasible half-integer multiflow $f$, there exists a feasible integer   multiflow $f'$, computable in polynomial time, and $|f'|\ge |f|/2$.
\end{theorem}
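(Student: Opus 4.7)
The plan is to reduce the rounding problem to finding a large stable set in a planar auxiliary graph and then invoke Lemma~\ref{lem:stable}.

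First, by Lemma~\ref{lem:lamexists} I assume $f$ is laminar, so the cycles $C_P := P \cup \{e_P\}$ associated with paths of positive flow are pairwise non-crossing in the planar embedding of $G+H$. View $2f$ as an integer flow with capacity $2c$ and value $2|f|$, and replace each $e \in E$ by $c(e)$ parallel unit-capacity copies, producing a plane graph $G'+H$. Redistribute the flow of $2f$ among the copies so that every copy carries at most two unit paths; this is always feasible by pigeonhole since at most $2c(e)$ units of flow must be packed into $c(e)$ copies of $e$. Finally, decompose the redistributed integer flow into a multiset $\mathcal{U}$ of $|\mathcal{U}|=2|f|$ unit paths in $G'$.

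Second, define the auxiliary graph $A=(\mathcal{U},E_A)$ by declaring $U \sim U'$ iff $U$ and $U'$ share a copy-edge of $G'$. The key claim is that $A$ is planar: the cycles $C_U$ remain pairwise non-crossing in $G'+H$, so placing a vertex of $A$ inside the region bounded by $C_U$ for each $U\in\mathcal{U}$, one can draw each edge of $A$ as a short curve crossing the shared copy-edge of $G'$. Since at most two cycles traverse any single copy-edge, each copy-edge contributes at most one edge of $A$, and these can be routed in disjoint tubular neighborhoods around the defining edges; the resulting drawing is crossing-free.

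Third, apply Lemma~\ref{lem:stable} to $A$ to obtain in polynomial time a stable set $S \subseteq \mathcal{U}$ with $|S| \ge |\mathcal{U}|/4 = |f|/2$. Define $f'$ by routing one unit of flow along each $U\in S$ (collapsing the copies back to the original edges of $E$). Stability of $S$ guarantees at most one path of $S$ per copy-edge of $G'$, so the flow of $f'$ on any $e \in E$ is at most the number of copies of $e$, namely $c(e)$; thus $f'$ is feasible. Its value is $|S| \ge |f|/2$, and every step runs in polynomial time.

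The point I expect to be the main obstacle is the planarity of $A$. Although the intuition of placing a vertex inside each closed region and routing $A$-edges through the shared edges is compelling, making it precise requires care: laminarity forces non-crossing cycles in $G'+H$, the redistribution guarantees that each copy-edge is shared by at most two cycles, and together they yield a planar tangency graph. Without either ingredient (for instance, at an original edge of high capacity before the redistribution), $A$ could contain arbitrarily large cliques and Lemma~\ref{lem:stable} would produce too small a stable set to reach the $|f|/2$ bound.
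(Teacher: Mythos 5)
Your architecture is the paper's: laminarise, split every edge into unit-capacity copies so that each copy carries at most two flow paths, form the conflict graph of the paths, and extract a stable set of size one quarter via Lemma~\ref{lem:stable}. Two steps, however, have genuine gaps.

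The first is polynomiality, which is part of the statement being proved. Your multiset $\mathcal{U}$ has $2|f|$ elements, and $|f|$ can be of the order of $\sum_{e\in E}c(e)$, which is exponential in the input size when capacities are encoded in binary; the auxiliary graph $A$ cannot even be written down in polynomial time, let alone $4$-coloured. The paper's proof begins by peeling off the integer part $\lfloor f(P)\rfloor$ of every path --- routing it integrally at no loss and reducing capacities accordingly --- so that the residual flow has all values equal to $1/2$ and is supported on the $O(|V|)$ members of a laminar family (Lemma~\ref{lem:lampprop}a); only these $O(|V|)$ objects are ever rounded, and adding the integer parts back only strengthens the bound $|f'|\ge|f|/2$. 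You need this reduction (and, as the paper notes, the subdivision into $c(e)$ copies must remain conceptual rather than executed).

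The second is the planarity of $A$, which you correctly single out as the crux but whose proposed proof fails for nested cycles, and laminarity produces exactly those. Suppose $C_U$ and $C_{U'}$ share a copy-edge $e'$ and the disk bounded by $C_U$ is contained in the disk bounded by $C_{U'}$, i.e.\ $L_U\subset L_{U'}$ in the dual. Writing $(e')^*=xy$ with $x\in L_U$, we get $x\in L_{U'}$ and hence $y\notin L_{U'}$: one side of $e'$ is a face inside \emph{both} disks and the other side is a face outside \emph{both}. A short arc through $e'$ therefore runs from the interior of the inner disk to the exterior of the outer disk --- not to the annulus where $v_{U'}$ would have to sit --- and in degenerate cases (e.g.\ $L_{U'}$ equal to the union of its children) there is no face of the outer disk left over in which to place $v_{U'}$ at all. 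The paper proves planarity by a different route: after complementing some sets, either the family is pairwise disjoint, in which case the conflict graph is a contraction minor of the planar dual, or it contains a three-chain $L_1\subset L_0\subset L_2$, in which case the ``each edge in at most two cuts'' property forces $L_0$ to be a cut vertex of the conflict graph and one concludes by induction on the two pieces. You should either adopt that induction or supply a drawing argument that genuinely handles nesting; the tangency picture alone does not.
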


\begin{proof} Let $(G,H,c)$ and  $f$ be as assumed in the condition, moreover that $f$ is laminar (Lemma~\ref{lem:lamexists}). We proceed by induction on the integer $2|f|$. We suppose that {\em  all nonzero values $f(P)>0$ $(P\in\mcP)$ are actually $1/2$}: if $f(P)\ge 1$, we can decrease $f(P)$ by $\lfloor f(P)\rfloor$, as well as all capacities of edges of $P$, and the  statement follows from the induction hypothesis. (Such a step can be repeated only a polynomial number of times, since $|\mcP|=O(|V|)$ by Lemma~\ref{lem:lampprop}.)
	
To choose the values to round we replace $c(e)$, by $c(e)$ parallel edges of capacity $1$ each.\footnote{This is not an allowed step for a polynomial algorithm, but it will not really be necessary to do it. The choice of the cuts  to be rounded down or up  will be clear from the proof without actually executing this subdivision.  The choices for rounding concern a family of size  $O(|V|)$ .} Then take the plane dual of the resulting graph, which is  $(G+H)^*$ with each edge $e^*\in E^*$   replaced by a path of size $c(e)$. We consider the laminar system $\mcL$ defining the paths $P\in\mcP$, $f(P)>0$ \footnote{$\{\delta_{E^*\cup F^*}(L)^*:L\in\mcL\}=\{P\cup e_P: P\in\mcP, f(P)>0\}$,  as before.} in this subdivided graph so that every edge is contained in at most two sets $L\in\mcL$, and remains laminar (this is clearly possible, since all positive $f_P$ values are $1/2$). For simplicity, we keep the notations of the original graph - as if what we get in this way were the given graph with all capacities equal to $1$.  

Let $I(\mcL)=(\mcL,M)$ be the intersection graph of the cuts defined by $\mcL$, that is, $M:=\{L_1L_2: L_1,L_2\in\mcL, \delta(L_1)\cap \delta(L_2)\ne\emptyset\}$. We have Claim~a. and b. so far, and now we check Claim~c.:

\medskip\noindent{\bf Claim}: a. Each $e\in E^*$ is contained in at most two sets in $\{\delta(L):L\in\mcL\}$. 

b.  $|f|=|\mcL|/2$.

c. $I(\mcL)$ is planar. 

\smallskip
To check Claim~c., note first its validity if $\mcL$ consists of disjoint sets. If this does not hold, not even by {\em complementing}  some $L\in\mcL$ (i.e. replacing it by $V^*\setminus L$), then it is easy to find (possibly after coplementation) three sets $L_1\subset L_0\subset L_2$ in $\mcL$. We claim that $L_0$ is a cut-vertex in  $I(\mcL)$. By laminarity, every cut $\delta(L)$ $(L\in\mcL)$ has either a shore $A$ contained in $L_0$ (like $L_1$), or a shore $B$ disjoint from $L_0$ (like  $V\setminus L_2$). If there exists an $e\in\delta(A)\cap\delta(B)$, this would mean that $e$ has an endpoint in  $A\subseteq L_0$ and the other endpoint in $B\subseteq V\setminus L_0$, so $e\in\delta(L_0)$, contradicting Claim~a. So {\em $L_0$ is a cut vertex, i.e.
$\mcL=\mcL_1\cup\mcL_2$, $\mcL_1\cap\mcL_2=\{L_0\}$, with no edge between $\mcL_1$ and $\mcL_2$ in $I(\mcL)$.}

 Since the graphs induced by $\mcL_i$ ($i=1,2$) are both defined by flows of smaller value, we can apply the induction hypothesis to them: they are planar, so $I(\mcL)$ is also planar, and the Claim follows.  

\smallskip To finish the proof of the theorem using Claim~c., find a stable set of size $|\mcL|/4$ in $\mcL$, by Lemma~\ref{lem:stable}, and increase the flow on the corresponding paths $\delta(L)^*\setminus F$ to $1$, while decreasing the flow on the other paths to $0$. This results in a feasible integer flow $|f'|\ge |\mcL|/4=|f|/2$, finishing the proof of the bound.

	%Let $n$ be the number of vertices in the graph. Since the regions corresponding to these flow paths form a  laminar family, the number of region $t$ is $O(n)$. This implies that the size of $G'$ is $O(n)$. Robertson et al.\cite{robertson1996efficiently} gave a quadratic time algorithm to color a planar graph with four colors. We use their algorithm to compute 
	
	%\smallskip
	%By Lemma~\ref{lem:stable}  there exists in $(G+H)^*$ a stable set of size $|V^*|/4$. The vertices in the independent set then identify the paths along which the flow can be rounded up, while on all the other vertices it has to be rounded down, resulting in an integer flow $|f'|\ge |\mcL|/4=|f|/2$, finishing the proof of the bound. 
	
	\smallskip
	For the computational complexity results, first recall that the support of the half integer laminar flow $f$ obeys Lemma~\ref{lem:lampprop}a.  Then note, that among this linear number of sets, the proof finds in polynomial time at least one fourth of the cuts to round up, while the other cuts are rounded down, so that the capacity constraints are not violated. It is straightforward to mimick the subdivision of edges without doing it, and to compute an input to  Lemma~\ref{lem:stable}, in strongly polynomial time. \qed
	%Baker~\cite{Baker1994} gives a polynomial time approximation scheme to compute a maximum independent set in a planar graph. We use Baker's algorithm to compute a maximum independent set in $G'$. 
\end{proof}

\section{Examples}\label{sec:examples}
In the previous sections we showed upper bounds on   the flow-cut gap, lower bounds on  the (half)-integrality gap equal to the approximation guarantee of multiflow maximisation.   In this Section we present some examples proving lower bounds for multicuts and upper bounds for multiflows, showing the performance of the precedingly proved theorems.

A first simple example occurred already: $(\overline{C_4},\overline{2K_2})$ (Footnote~\ref{foot:K4}), where the maximisation of demands so that a feasible flow exists can be achieved in integers, but the maximum integer flow is only half of the maximum flow.  Another,  still simple and  small example shows already that a unique max-flow may not even be half-integer (Figure~\ref{fig:non-half-integer}).

\begin{figure}[t!]
	\centering
	\includegraphics[width=4.5in]{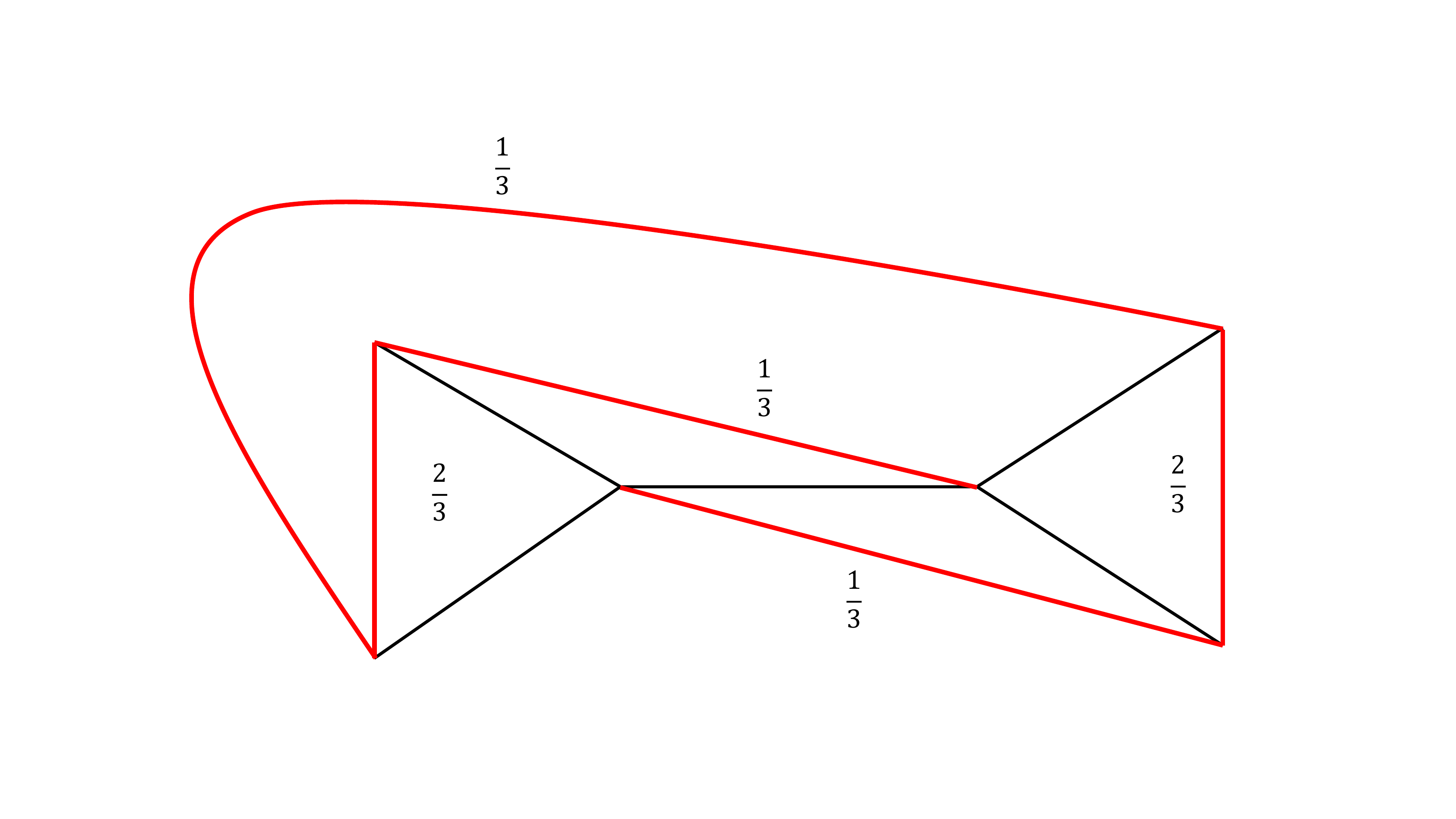}
	\caption{Black edges are supply while red (thick) ones are demand edges. All supply edges have capacity 1. The maximum flow value is $7/3$, while the value of the maximum half-integer (or integer) flow is $2$. The value of minimum multicut is $3$. So the (half-) integrality gap of this example is $6/7$, the flow-cut gap is at least $9/7$, the (half) integer flow-cut gap is at least $3/2$. }
	\label{fig:non-half-integer}
\end{figure}

The gap values of Figure~\ref{fig:non-half-integer} are not best possible. We provide now  an infinite class  of examples providing asymptotically the best possible values for some of the gaps.  For every non-negative integer $k$ we define a graph $G_k$, where for instance the half-integer flow-cut gap tends to $2$ as $k \rightarrow \infty$. (Cheriyan et.al. \cite{cheriyan2008integrality} used these instances to show integrality gap results for the Tree Augmentation Problem.)

Let $G_k=(V_k,E_k),\, H_k=(V_k,F_k),  k \geq 3$ be an instance of the multiflow problem defined as follows: $V_k=\{ a_{1},a_{2},\ldots,a_{k}\}$ $\cup \{b_{1},b_{2},\ldots,b_{k}\}, E_k=\{(a_{i},b_{i})| i \in [1,k]\} \cup \{(a_{i},a_{i+1})| i \in [1,k-1] \}$ and $F_k=\{(b_{i},b_{i+1})| i \in [1,k-1]\} \cup \{(b_{i},a_{i+2})| i \in [1,k-2]\}$. The capacity of all edges in $E_k$ is $1$ (see Fig.~\ref{fig:gap}).
\begin{figure}[htb] 
	\centering
	\includegraphics[width=3.5in]{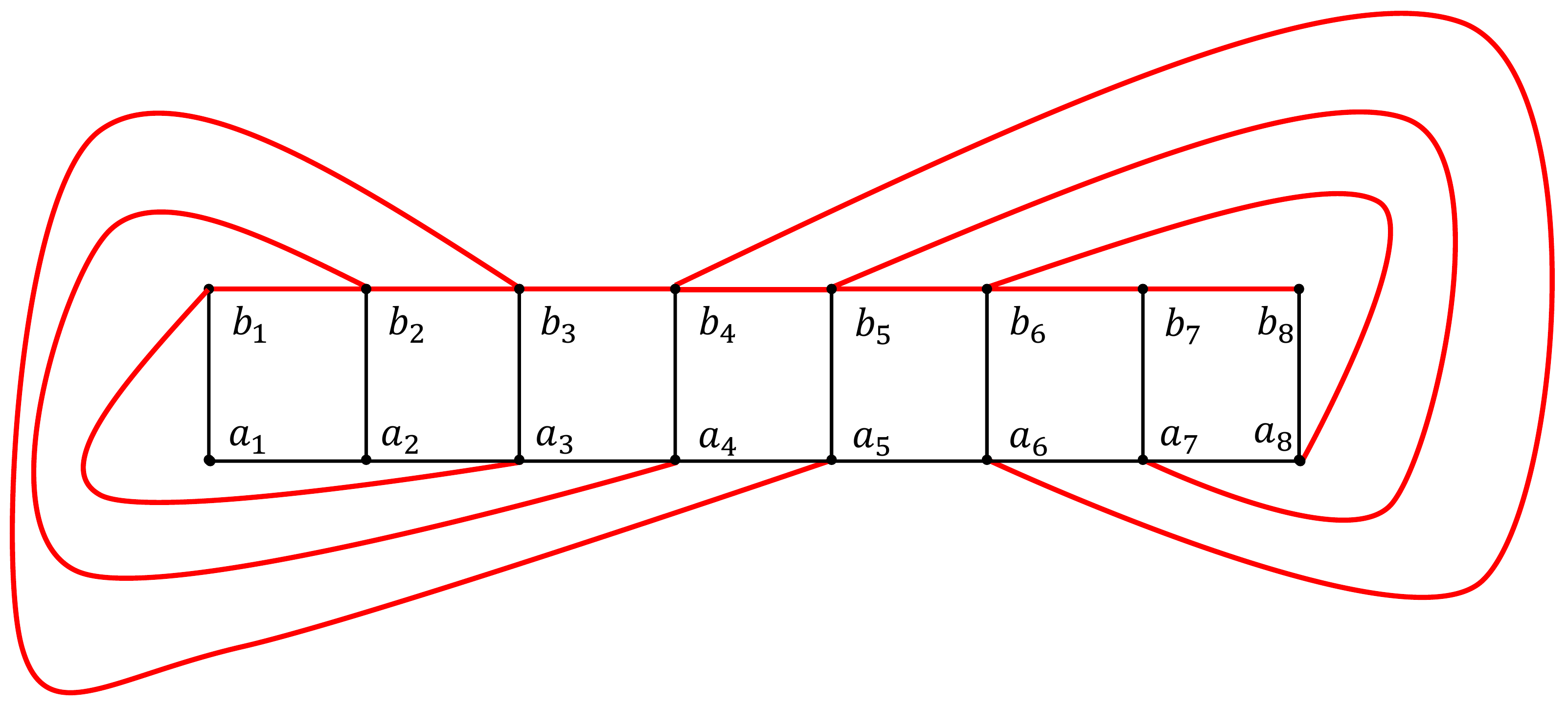}
	\caption{$G_8$: the capacity of supply edges is $1$; demand edges are red (thick).}
	\label{fig:gap}
\end{figure}
\begin{theorem} \label{theorem:gap}
The graph $G_k+H_k$ is planar for all $k=1, 2,\ldots$, and the following hold:
	
	-- The minimum multicut capacity is $k-1$.
	
	-- The maximum multiflow value is  $2(k-1)/3$.
	
	--  The maximum half-integer multiflow value is $k/2$.
	
	--  The maximum integer multiflow value is $\lfloor k/2\rfloor$.

\end{theorem}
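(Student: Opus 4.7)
The plan is to exploit that $G_k$ is a caterpillar tree with spine $a_1 a_2 \cdots a_k$ and a pendant leaf $b_i$ attached to each $a_i$. Since $G_k$ is a tree, every pair of vertices is joined by a \emph{unique} path, and each of the $2k-3$ demand pairs has its unique supply path of length exactly three. The four numerical claims therefore reduce to a single linear program over the variables $y_i := f(b_i b_{i+1})$ for $i \in [1, k-1]$ and $x_i := f(b_i a_{i+2})$ for $i \in [1, k-2]$, with one constraint per supply edge: $y_{i-1} + y_i + x_i \le 1$ at each pendant $a_i b_i$ and $y_i + x_i + x_{i-1} \le 1$ at each spine edge $a_i a_{i+1}$ (using the boundary conventions $y_0 = y_k = x_0 = x_{k-1} = 0$).

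For planarity I would exhibit an explicit embedding extending Figure~\ref{fig:gap}: place $a_1, \ldots, a_k$ left-to-right on a horizontal line with each $b_i$ directly above $a_i$, draw the spine and rungs as straight segments, draw each arc $b_i b_{i+1}$ just above the $b$-row, and route each diagonal $b_i a_{i+2}$ as an arc that first goes above $b_{i+1}$ and then descends to $a_{i+2}$ in the vertical strip between columns $i+1$ and $i+2$. Successive arcs of either type nest and the two families do not interfere, yielding a planar drawing.

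For the minimum multicut $k-1$, the set $\{a_i a_{i+1} : 1 \le i \le k-1\}$ splits $G_k$ into the $k$ singleton pairs $\{a_i, b_i\}$, so it cuts every demand; the matching lower bound follows by induction on $k$, since if $a_{k-1} a_k$ is absent from a multicut $M$ then one of the rightmost pendants must be in $M$ to cut the demand $b_{k-1} b_k$, after which the rightmost pair can be removed and the induction hypothesis applied. For the maximum multiflow $2(k-1)/3$, I would match a primal and a dual LP solution: dually, place weight $1/3$ on each interior pendant $a_i b_i$ ($2 \le i \le k-1$) and each interior spine edge $a_i a_{i+1}$ ($2 \le i \le k-2$), weight $2/3$ on the extreme spine edges $a_1 a_2$ and $a_{k-1} a_k$, and weight $0$ on the extreme pendants $a_1 b_1$ and $a_k b_k$; each unique demand path then collects weight at least $1$ and the total weight equals $2(k-1)/3$. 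Primally, the symmetric solution $y_i = x_i = 1/3$ in the interior, with the values near the two ends boosted to $1/2$, achieves the same value.

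For the half-integer multiflow, $y_i = 1/2$ for all $i \in [1,k-1]$ together with $x_1 = 1/2$ is feasible with $|f| = k/2$, and for the integer multiflow $\{b_i b_{i+1} : i \text{ odd}\}$ is edge-disjoint with size $\lfloor k/2 \rfloor$. Proving the matching upper bounds is the principal obstacle because the fractional LP value $2(k-1)/3$ strictly exceeds $k/2$ (and $\lfloor k/2 \rfloor$) for odd $k \ge 5$, so LP duality alone cannot close the gap. I would handle this by a structural argument invoking half-integrality (respectively integrality): the pendant and spine constraints allow only finitely many local configurations at each triple $(a_i b_i, a_i a_{i+1}, a_{i+1} a_{i+2})$ in $\tfrac12\mathbb{Z}$ (respectively $\mathbb{Z}$); a local exchange argument shows that any configuration beating $k/2$ (respectively $\lfloor k/2 \rfloor$) must violate an adjacent constraint, and summing the local contributions gives the global bound.
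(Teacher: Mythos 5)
Your setup---reducing everything to an explicit LP over the $2k-3$ unique tree paths, with one constraint per pendant and per spine edge---is sound and is essentially the arena in which the paper also works; your primal constructions (the $1/3$--$2/3$ fractional flow, the half-integer flow of value $k/2$, the $\lfloor k/2\rfloor$ edge-disjoint paths) and the multicut upper bound are fine. The problems are all on the upper-bound side. First, your dual certificate for the fractional assertion does not have the value you claim: $k-2$ interior pendants at $1/3$, $k-3$ interior spine edges at $1/3$, and two extreme spine edges at $2/3$ sum to $(k-2)/3+(k-3)/3+4/3=(2k-1)/3$, which exceeds $2(k-1)/3$ by $1/3$. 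This is not a repairable slip in the weights: already for $k=4$ the flow $f(P_{b_1b_2})=f(P_{b_1a_3})=1/2$, $f(P_{b_2b_3})=f(P_{b_2a_4})=1/4$, $f(P_{b_3b_4})=3/4$ is feasible (every supply edge except $a_4b_4$ carries load exactly $1$) and has value $9/4>2=2(k-1)/3$, so no dual solution of value $2(k-1)/3$ can exist and LP duality cannot certify the second assertion as stated. (This also puts pressure on the paper's own inductive step, whose claim that $\alpha=2/3+\varepsilon$ forces $\alpha'\le 2/3-\varepsilon$ in the reduced instance only follows in the weaker form $\alpha'\le 2/3-\varepsilon/2$ from the two relevant edge constraints; the second assertion needs to be re-examined before either proof can be completed.)

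Second, and more centrally, the upper bounds $\mu_{\rm half}\le k/2$ and $\mu_{\rm int}\le\lfloor k/2\rfloor$---which you correctly identify as the principal obstacle, since the LP value strictly exceeds them---are not actually proved: ``a local exchange argument shows that any configuration beating $k/2$ must violate an adjacent constraint'' restates what has to be shown rather than showing it, and it is exactly here that the content of the theorem lies. The paper's mechanism is an induction on $k$ through the quantity $\alpha$, the total flow through $a_1b_1$ (equivalently through $a_1a_2$): one proves the strengthened statement $\mu_{\rm half}\le (k-2)/2+\alpha$ for $\alpha\ge 1/2$, deletes $\{a_1,b_1\}$ to land in a copy of $(G_{k-1},H_{k-1})$, and tracks how the edge constraints at $a_2b_2$ and $a_2a_3$ bound the corresponding quantity of the reduced instance. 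Some such strengthening of the induction hypothesis is needed to make your local contributions sum correctly; without it the third and fourth assertions are unproven. Finally, your multicut induction has an unhandled case: if $M$ contains $a_{k-1}b_{k-1}$ but neither $a_{k-1}a_k$ nor $a_kb_k$, then deleting the pair $\{a_k,b_k\}$ removes no edge of $M$, and the induction hypothesis yields only $|M|\ge k-2$. The paper avoids this by peeling from the \emph{left} end, where both demands incident to $b_1$ use both of $a_1b_1$ and $a_1a_2$, and by observing that if both edges at the next vertex lie in the restricted multicut then that multicut is not inclusionwise minimal and hence has at least $k-1$ edges; your right-end peeling needs an analogous extra observation because the two ends of $G_k$ are not symmetric.
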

\begin{proof} The minimum multicut capacity  is clearly at most $k-1$, since deleting each edge of the $(a_1,a_{k-1})$-path, the endpoints of all demand-edges are separated. We check now $|C|\ge k-1$ for an arbitrary multicut $C$, by induction. For $k=1,2$ the statement is obvious.  Deleting  $b_1$ and $a_1$, the remaining $(G',H')$ is (isomorphic to) $G_{k-1},H_{k-1}$ and $C':=C\setminus\{a_1b_1, a_1a_2\}$ is a multicut in it. By the induction hypothesis, the minimum multicut of  $(G',H')$ is of size $k-2$.  
	
	Now  either both $a_2 b_2\in C'$ and $a_2 a_3\in C'$  in which case $C'$ is not an inclusonwise minimal multicut for $(G',H')$, since deleting one of $a_1b_1, a_1 a_2$ we already disconnect  the same terminal pairs. So in this case $|C'|>k-2$, and we are done; 
	or one of $a_2b_2, a_2a_3$ is not in $C'$, but then one of $a_1b_1, a_1a_2$ must be in it, since otherwise $b_1$ is not disconnected by $C$ from $b_2$ or from  $a_3$. So in this case $|C|\ge |C'|+1\ge k-1,$ finishing the proof of the first assertion.
	
	For the maximum multiflow value $\mu$, or maximum half-integer and integer multiflow values $\mu_{\rm half}$, $\mu_{\rm int}$, note that the supply edges form a tree, so the set $\mcP$ of paths between the endpoints of demand  edges contains exactly one path $P_e$ for each demand edge $e$, so $|\mcP|=2k-3$. Defining $f(P_{b_{k-1} b_k})=2/3$ and $f(P)=1/3$ for each other path $P\in\mcP$, we have $|f|=2(k-1)/3$.  To prove that this is a maximum flow, note that $e_1:=a_1b_1$ is contained in $P\in\mcP$ if and only if $e_2:=a_2b_2$ is contained in it, so for any multiflow $f$,  
	\[\alpha:=\alpha_f:=\sum_{\{P\in\mcP:e_1\in P\}} f(P)=\sum_{\{P\in\mcP:e_2\in P\}} f(P)\]
	
	 %P\in $e_2=a_2b_2$ are in series so the flow through them is  the same number  $\alpha\in [0,1]$ for the two edges.
	
\noindent{\bf Claim}: $\mu\le 2(k-1)/3$, and if $\alpha\le 2/3$, then  $\mu\le 2(k-2)/3 +  \alpha.$

\smallskip
Indeed,  on the one hand,  $\alpha=2/3+\varepsilon$ $(0\le \varepsilon \le 1/3)$ causes  $\alpha'\le 2/3-\varepsilon$  in $(G', H')$ isomorphic to  $(G_{k-1},H_{k-1})$ with max flow value $\mu'$,  after the deletion of $\{a_1,b_1\}$, and then by induction,  $\mu\le  2/3+\varepsilon + \mu'\le 2/3+\varepsilon + 2(k-2)/3 -  \varepsilon=2(k-1)/3$. On the other hand, if  $\alpha=2/3-\varepsilon$ we have by induction
$\mu\le 2/3-\varepsilon + \mu'\le  2/3-\varepsilon + 2(k-2)/3=
2(k-1)/3-\varepsilon$, and the claim is proved, finishing the proof of the second statement.  

%The maximum multiflow value is bounded by $\frac{2k=1}{3}$, since the total capacity of the edges is $2k-1$ and each path with one unity of  flow charges the capacity by $3$ . 

The proof of the third assertion, concerning $\mu_{\rm half}$ works similarly. Defining $f(P_{b_i b_{i+1}} )=1/2$ $(i=1,\ldots, k-1)$, and $f(P_{b_1 a_3} )=1/2$, otherwise $f(P)=0$, where $f$ is a feasible half-integer flow,  $|f|=k/2$. To prove  $\mu_{\rm half}\le k/2$ an inductive proof using $\alpha$ works as before:  for $\alpha\ge 1/2$ the stronger statement $\mu\le (k-2)/2 + \alpha$
follows by induction.

Finally, the fourth assertion is immediate from the third one:  $\mu_{\rm int}\le \mu_{\rm half}\le k/2$ is integer, and an integer flow $f$,  $|f|=\lfloor k/2\rfloor$ is also easy to construct. 	
	\qed
	\end{proof}

This theorem provides the best known bounds for several of the defined gaps, and in some cases the best possible bounds that are reviewed below.  
%Not forgetting about some easy cases, this allows us to review all the lower bounds: 

\section{Conclusions}\label{sec:conclusions}

This paper established bounds on the integrality gap of multiflows,  developed approximation algorithms for them, and bounded the flow-cut gap. Applying Theorem~\ref{maxflow mincut} first, then Theorem~\ref{thm:half} to a maximum multiflow, and finally Theorem~\ref{thm:int} to the result, we arrive at the following summary:

\begin{theorem} \label{thm:main}
There exists a $1/4$-approximation algorithm for integer plane multiflow maximisation, with an integer flow-cut gap of $8$; there exists a $1/2$-approximation algorithm for half-integer plane multiflows with a half-integer flow-cut gap of $4$; the flow-cut gap is at most $2$; the  approximation algorithms provide lower bounds to the integrality or half-integrality gap equal to the approximation ratio. 
\end{theorem}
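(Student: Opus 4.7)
The plan is to compose the three main theorems already proved in Sections~\ref{sec:multicuts}--\ref{sec:int}. Let $\mu^*$ denote the value of a maximum fractional multiflow, $\mu_{\text{half}}$ the maximum half-integer multiflow, $\mu_{\text{int}}$ the maximum integer multiflow, and $\nu$ the minimum multicut capacity of the instance $(G,H,c)$. First I would compute an optimal fractional multiflow $f^*$ with $|f^*|=\mu^*$ in polynomial time using the LP-based routine cited in Section~\ref{sec:introduction} (Schrijver, 70.6). This is the starting point for both the approximation algorithms and the gap bounds.

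Next, applying Theorem~\ref{thm:half} to $f^*$ yields in polynomial time a feasible half-integer multiflow $f'$ with $|f'|\ge \mu^*/2$, and then applying Theorem~\ref{thm:int} to $f'$ yields a feasible integer multiflow $f''$ with $|f''|\ge |f'|/2 \ge \mu^*/4$. Because $\mu_{\text{int}} \le \mu_{\text{half}} \le \mu^*$, we obtain $|f'|\ge \mu_{\text{half}}/2$ and $|f''|\ge \mu_{\text{int}}/4$, which gives the claimed $1/2$- and $1/4$-approximation algorithms. The same inequalities $\mu_{\text{half}}\ge \mu^*/2$ and $\mu_{\text{int}}\ge \mu^*/4$ provide the matching lower bounds on the half-integrality and integrality gaps, proving the last sentence of the theorem.

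Finally, for the flow-cut gap bounds, I would invoke Theorem~\ref{maxflow mincut}, which produces a feasible fractional multiflow $\hat f$ and a multicut $Q$ with $c(Q)\le 2|\hat f|\le 2\mu^*$; since $\nu\le c(Q)$, this gives the fractional flow-cut gap bound $\nu/\mu^*\le 2$. Combining this with the rounding inequalities of the previous paragraph, I obtain $\nu\le 2\mu^*\le 4\mu_{\text{half}}$, so the half-integer flow-cut gap is at most $4$; and $\nu\le 2\mu^*\le 8\mu_{\text{int}}$, so the integer flow-cut gap is at most $8$.

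There is no real obstacle, as the statement is essentially a bookkeeping assembly of Theorems~\ref{maxflow mincut}, \ref{thm:half}, and~\ref{thm:int}. The only point demanding care is keeping the directions of the inequalities consistent: the flow-cut amplification from $2$ to $4$ and $8$ relies on the fact that Theorem~\ref{thm:half} and Theorem~\ref{thm:int} bound the \emph{true} half-integer and integer optima from below by $\mu^*/2$ and $\mu^*/4$ respectively, not merely the particular flows produced by the algorithms, and this is immediate since the algorithmic outputs are themselves feasible half-integer and integer multiflows.
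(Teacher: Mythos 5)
Your proposal is correct and follows exactly the paper's own (one-sentence) proof: compute a maximum fractional multiflow, chain Theorem~\ref{thm:half} and Theorem~\ref{thm:int} to lose a factor of at most $2$ at each rounding step, and combine with Theorem~\ref{maxflow mincut} to amplify the flow-cut gap of $2$ to $4$ and $8$ for the half-integer and integer cases. The bookkeeping, including the observation that the algorithmic outputs are feasible half-integer/integer flows and hence lower-bound the true optima and the (half-)integrality gaps, matches the intended argument.
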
  
{\bf a. Multicuts and flow-cut gap}: The minimum multicut is NP-hard to find, but has a PTAS by applying \cite{klein2015correlation} after the transformation of  Lemma~\ref{lem:multicutconnector}.
The (``fractional'') flow-cut gap is in the interval  $[3/2,2]$ by Theorem~\ref{theorem:gap} illustrated in Figure~\ref{fig:gap} and Theorem~\ref{thm:main}; 
the half-integer flow-cut gap is at least $2$ and at most $4$ for the same reason.  For the integer flow-cut gap, a lower bound of  $2$ is  shown by $(\overline{C_4},\overline{2K_2})$ (Footnote~\ref{foot:K4}). The true value of the integer flow-cut gap is thus wide open in the interval $[2,8]$.  

\medskip\noindent
{\bf b. Half-integrality gap}: We do not know the complexity of finding  a half-integer multiflow of maximum value. The worst case ratio between the maximum value of a  half-integer  feasible flow and of a  fractional flow is in the interval $[1/2,3/4]$, again by  Theorem \ref{thm:half} and Theorem \ref{theorem:gap}. It remains an interesting open problem to pin down the exact half-integrality gap in this interval, and the complexity of finding the maximum value of a half-integer flow. 
%This problem does not look easy though, since there is a simple example in which there is a unique maximum multiflow which is $1/3$ integer (Figure~\ref{fig:non-half-integer}).  

%We do not know the complexity of finding  a half-integer multiflow of maximum value. This problem does not look easy though, since there is a simple example in which there is a unique maximum multiflow which is $1/3$ integer,  and where the ratio of the maximum half integer flow over the maximum flow is $6/7$. In the paper, we showed that the ratio of the maximum half integer flow over the maximum flow is at least $1/2$ and there are examples where the gap is at most $3/4$. It remains an interesting open question to pin down the exact gap.
%We will include this example in the full version of this paper.  

\medskip\noindent
{\bf c. Integrality gap}: This lies in the interval $[1/4,1/2]$, with lower bound given by Theorem \ref{thm:main}, and upper bound by $(\overline{C_4},\overline{2K_2})$ (Footnote~\ref{foot:K4}). Finally, the worst integer flow/half-integer flow ratio is closed: it is exactly $1/2$, as the same example and Theorem~\ref*{thm:int} show.  

%ALSO BECSLESEKET ELOZO SECTIONBE. multicuts: NP-hard, PTAS.
%IDE A TABLAZATBA FOOTNOTE A REF-EKKEL. 

\vskip-0.5cm
\begin{SCtable}
	\begin{tabular}{l|c|c|c|}
		\cline{2-4}
		{\color{blue}SUMMARY}& {\bf fractional} & {\bf half-integer} & {\bf integer} \\
		
		\hline\hskip-1.5pt\vline
		\hbox{  \bf mflow/LP} & 1 &  [1/2,\, 3/4 )\footnotemark[9]   &  [1/4,\,1/2)\footnotemark[10]\\
		\hline\hskip-1.5pt\vline
		\hbox{  \bf mcut/mflow} & (3/2,\,2]\footnotemark[11] & (2,\,4]\footnotemark[12]  & (2,\,8]\footnotemark[11] \\
	%	\hline\hskip-1.5pt\vline
	%	\hbox{  \bf approx ratio} & 2 & [2,4]   & [2,8] \\
		\hline\hskip-1.5pt\vline
		\hbox{  \bf complexity} & P  &  {\color{red}  \bf ?}     & NP-hard\\
		\hline
	\end{tabular}
	\caption*{Gaps, complexity, and approxima\-tion ratios for plane multiflows and multicuts }\label{tab:test}
\end{SCtable} 
\vskip-0.7cm\noindent
The approximation ratios for multiflows are the lower bounds of the first row. The second row contains the flow-cut gaps; an upper bound for  the approximation ratio for minimum multicuts is provided by the first upper bound in this row (equal to $2$), but there is also a PTAS \cite{klein2015correlation} for the minimum multicut problem together with an NP-hardness proof. The complexity results (last row) are known (see Introduction) and the Preliminaries, except for half-integer flows. The Gaps and exact approximation ratios are open in the indicated intervals and the maximum half integer plane multiflow problem is also  open. 

\footnotetext[9]{Lower bound: Theorem~\ref{thm:half}\,; upper bound: Theorem~\ref{theorem:gap}\,, see Figure~\ref{fig:gap}\,.}
\footnotetext[10]{Lower bound: Theorem~\ref{thm:half}\,, Theorem~\ref{thm:int}\,; upper bound: $(\overline{C_4},\overline{2K_2})$.}
\footnotetext[11]{Lower bound: Theorem~\ref{theorem:gap}\,, see Figure~\ref{fig:gap}; upper bound: Theorem~\ref{maxflow mincut}}
\footnotetext[12]{Lower bound: $(\overline{C_4},\overline{2K_2})$; upper bound:  Theorem~\ref{thm:main}\,.}
%\footnotetext[12]{}
%\footnotetext[12]{Lower bound: $(K_4,2K_2)$; upper bound:  Theorem~\ref{thm:main}}
%In lack of space, we  postpone the  missing proofs to the full version of the paper.

\bibliographystyle{plain}
\bibliography{sp_iv.bib}
\appendix

\bigskip
\noindent
{\large \bf APPENDIX:  Uncrossing, Laminarity and Integrality}
%\section{Uncrossing and Integer Programming}\label{sec:uncrossing}

\smallskip
Uncrossing, laminar families and the related totally unimodular network matrices and integer solutions to linear programs with constraint-matrices having some of these properties are  widely used in combinatorial optimization \cite{schrijver2003combinatorial}, \cite{kortevygen2018combinatorial}, \cite{frank2011connections}. We wish to sketch here the main ideas or references about facts or arguments related to  proofs, algorithms or complexity results of our work. For some of these we provide an adaptation to the special cases and particular purposes of this paper, with the notation and terminology used here.  

The {\bf uncrossing} we use is essentially the same as the one introduced in 
\cite[Section 6.5, p.~235]{LovaszPlummer1986MatchingTheory} for odd cuts (more precisely ``$T$-cuts''). To restate it
recall $G=(V,E)$, $H=(V,F)$, let $G+H$ be planar, and consider cuts in $(G+H)^*=(V^*,E^*\cup F^*)$ of the form $\delta(X):=\delta_{E^*\cup F^*} (X)$. It is easy to check:

\smallskip
{\em  Let $\delta(X), \delta(Y)\subseteq E^*\cup F^*$ $(X,Y\subseteq V^*)$ be two cuts in $(G+H)^*$, both containing exactly one edge of $F^*$, which {\em cross} Then every edge of $E^*\cup F^*$ is covered at most as many times by $\delta (X\cap Y)$ and $\delta(X\cup Y)$, and -- possibly after complementing $X$ say -- both contain exactly one edge of $F^*$.}  Equivalently, {\em either $\delta (X\cap Y)$ and $\delta(X\cup Y)$, or  $\delta (X\setminus Y)$ and $\delta(X\setminus Y)$ contain  exactly one edge of $F^*$, so they ``can replace $\delta(X), \delta(Y)$''.}

Another property not to forget about is that {\em the number (multiplicity) of crossing pairs decreases in this way}.

 It is not difficult to see that the series of replacements, called {\em uncrossing}, is finite even if the cuts are endowed with multiplicity coefficients: then we can uncross $X$ and $Y$ with a  multiplicity equal to the minimum of their coefficients.  It is also true that uncrossing can be executed in polynomial time in a general setting. (See \cite[60.17, 60.3e]{schrijver2003combinatorial} for a proof of finiteness and further references.) 
 %In \cite{williamson1995primal} the very properties required for solving the (2ECAP) linear program are extracted. We are satisfied here with a black-box use of the corresponding special cases of  Lemma~\ref{lem:lamexists}. 

{\bf Laminarity} is used then throughout the paper. In Section~\ref{sec:multicuts} implicitly, through the WGMV algorithm. In Section~\ref{sec:half} we can use Edmonds and Giles's \cite{edmondsgiles1977} to prove integrality of  \eqref{eq:relax}  and \eqref{eq:c+1}. The closest to our use is Korte and Vygen's ``one-way cut-incidence matrix'' \cite[Theorem 28]{kortevygen2018combinatorial} which directly deduces the total unimodularity of our matrix $M$: replace every $E^*$ by arcs in both directions and put  a capacity of $c(e)$ on both (and infinite capacity on $F^*$) to have  \eqref{eq:relax}, and put $d(u,v), d(v,u)$ on the two arcs to get \eqref{eq:c+1}. These can also be deduced from the rooted tree representations of laminar systems and their total unimodularity  \cite[1.4.1, Section 4.2.2]{frank2011connections} and \cite[Theorem 13.21, pages 213-216]{schrijver2003combinatorial}, since the inequalities in \eqref{eq:relax} and \eqref{eq:c+1} are subpaths of paths from the root, so they define a non-negative network matrix. 

Once network matrices have been identified, total unimodularity is known from Tutte \cite{tutte1965networkunimodular}, included directly in Korte and Vygen's discussion. 
 Hoffman and  Kruskal \cite{hoffmankruskal1956integral} observed the {\bf integer solvability}  of linear programs with totally unimodular coefficient matrices, in polynomial time, and in particular, that of \eqref{eq:relax} and \eqref{eq:c+1}.   
 
After realizing polynomial solvability in multiple ways from known results, one can  develop simple direct algorithms for our special case. 

In fact a maximum integer solution to $Mx\le b, x\ge 0$ and a minimum solution to the dual can be realised by a greedy algorithm for arbitrary $b\in\mathbb{N}_+^{2|E|}$ assigning right hand sides independently of one another to each ordered pair $(u,v)$ and $(v,u)$ such that $uv\in E^*$, in the following way:

We restrict ourselves to   all $1$ costs, as in \eqref{eq:relax} and \eqref{eq:c+1}. % Whenever the right hand side for a row of $M$ is becoming $0$ we delete it, together with all the columns that are nonzero in it. 
Take an inclusionwise minimal set $L\in\mcL$ and let $x_L:=b(u_0,v_0)$ be the minimum of $b(u,v)$ among $u, v\in V^*$,  $u\in L$, $v\notin L$.  Deleting row $(u_0,v_0)$ from $M$ and all variables corresponding to a $1$ in this row (involving the deletion of all corresponding columns of $M$ and coordinates of $b$ and $c$), and updating $b$ with the assignments
$b(u,v):=b(u,v) - b(u_0,v_0)$  for all $u \in L, v \notin L$,    continue as follows: 

By induction, the greedy algorithm determines a primal and dual solution of the same value, a relation kept after  the assignment $x_L= b(u_0,v_0)$ and 

{\em if so far $y(u,v)=0$ for each pair $(u,v)$, $u\in L$, $v\in V^*\setminus L$, assign $y(u_0,v_0)=1$.} 

It is straightforward to check that the equality between the primal and dual solution is kept in this way, we leave this to the reader.  The particular structure of our constraints has to be exploited through the following property :   $\mcL(u,v)\supset\mcL (u_0,v_0)$  for all constraints that contain $L$ and do not disappear after the reduction. %for dual feasibility, the fact that has to be used. 	
\end{document}